\documentclass[journal]{IEEEtran}

\ifCLASSINFOpdf
\else
\fi

\usepackage{amsfonts}
\usepackage{amsmath}
\usepackage{amssymb}
\usepackage{bbding}
\usepackage{graphicx}
\usepackage{subfigure}
\usepackage{algorithm, algorithmic}
\usepackage{booktabs}
\usepackage{multirow}
\usepackage{diagbox}
\usepackage{color}
\usepackage{cite}
\usepackage{hyperref}
\usepackage{cleveref}

\newtheorem{proposition}{Proposition}
\newtheorem{proof}{Proof}

\DeclareMathOperator*{\argmin}{argmin}

\begin{document}

\title{A Novel Two-stage Design Scheme of Equalizers for Uplink FBMC/OQAM-based Massive MIMO Systems}

\author{Yuhao~Qi,
        Jian~Dang,~\IEEEmembership{Senior~Member,~IEEE,}
        Zaichen~Zhang,~\IEEEmembership{Senior~Member,~IEEE,}
        Liang~Wu,~\IEEEmembership{Member,~IEEE,}
        and~Yongpeng~Wu,~\IEEEmembership{Senior~Member,~IEEE}
\thanks{This work was supported by the NSFC projects (61971136, 61960206005, and 61803211), Jiangsu NSF project (No. BK20191261), Universities Natural Science Research Project of Jiangsu Province, China (No. 19KJB510048), Social Development Projects of Jiangsu Science and Technology Department (No. BE201874), Zhejiang Lab (No. 2019LC0AB02), Zhishan Youth Scholar Program of SEU, and Postgraduate Research $\&$ Practice Innovation Program of Jiangsu Province (No. KYCX21$\_$0106).}
\thanks{Y. Qi, J. Dang, Z. Zhang, and L. Wu are with the National Mobile Communications Research Laboratory, Southeast University, Nanjing 210096, China. Z. Zhang is also with the Purple Mountain Laboratories, Nanjing 211111, China. Y. Wu is with the Department of Electronic Engineering, Shanghai Jiao Tong University, Minhang 200240, China (e-mail: qiyuhao@seu.edu.cn; dangjian@seu.edu.cn; zczhang@seu.edu.cn; wuliang@seu.edu.cn; yongpeng.wu@sjtu.edu.cn).}
\thanks{Corresponding author: J. Dang (dangjian@seu.edu.cn)}}

%

\maketitle

\begin{abstract}
The self-equalization property has raised great concern in the combination of offset-quadratic-amplitude-modulation-based filter bank multi-carrier (FBMC/OQAM) and massive multiple-input multiple-output (MIMO) system, which enables to decrease the interference brought by the highly frequency-selective channels as the number of base station (BS) antennas increases. However, existing works show that there remains residual interference after single-tap equalization even with infinite number of BS antennas, leading to a limitation of achievable signal-to-interference-plus-noise ratio (SINR) performance. In this paper, we propose a two-stage design scheme of equalizers to remove the above limitation. In the first stage, we design high-rate equalizers working before FBMC demodulation to avoid the potential loss of channel information obtained at the BS. In the second stage, we transform the high-rate equalizers into the
low-rate equalizers after FBMC demodulation to reduce the implementation complexity. Compared with prior works, the proposed scheme has affordable complexity under massive MIMO and only requires instantaneous channel state information (CSI) without statistical CSI and additional equalizers. Simulation results show that the scheme can bring improved SINR performance. Moreover, even with finite number of BS antennas, the interference brought by the channels can be almost eliminated.
\end{abstract}

\begin{IEEEkeywords}
FBMC/OQAM, massive MIMO, highly frequency-selective channel, equalization, SINR.
\end{IEEEkeywords}

\IEEEpeerreviewmaketitle

\section{Introduction}
\IEEEPARstart{M}{ASSIVE} multiple-input multiple-output (MIMO) \cite{TLM10,FREA13,LGAA14} is a promising fifth generation (5G) wireless access technology that can satisfy stringent requirements, such as higher throughput and better accommodation for multi-user systems. By equipping with a large number of antennas at the base station (BS), the signals can be processed coherently over the antennas. Thus, the impacts of uncorrelated noise and multi-user interference can be arbitrarily small as the number of antennas increases to infinity. Other benefits of massive MIMO include extensive use of inexpensive low-power components, reduced latency, simplification of the MAC layer, and robustness against intentional jamming \cite{EOFT14}.

On the other hand, waveform design is also an important aspect to promote the performance of communication system. As one of the typical waveforms, orthogonal frequency division multiplexing (OFDM) \cite{OFDM90} has been widely adopted due to its high data rate transmission capability and robustness to multi-path fading. However, it has some limitations \cite{RNSM17} such as reduced spectral efficiency and high out-of-band (OOB) emissions. Therefore, new waveforms are proposed to address these issues. Even though new waveforms were not selected by the third generation partnership project (3GPP) for 5G new radio, these schemes remain as an interesting choice for future system development\cite{AJMR19}. Among various new waveforms, filter bank multi-carrier (FBMC) \cite{QDZW20,BFB11,FSTW14,TLLZ15} has been intensively studied. It consists of time-frequency well localized subcarrier filters, and the signal on each subcarrier is filtered individually. In fact, FBMC has been recognized as a competitive candidate in some new scenarios such as carrier aggregation, cognitive radio with spectrum sensing, and even in dense wavelength multiplexing passive optical network (DWDM-PON) based fronthaul using a radio-over-fiber technique \cite{SJSJ19}. The most popular form of FBMC is offset quadrature amplitude modulation (FBMC/OQAM) \cite{BFB11,PCNL02}, which splits the input QAM symbols into real and imaginary parts and requires orthogonality between subcarriers in the real domain for improving the time-frequency localization of the waveform and achieving maximal spectral efficiency.



\subsection{Prior Work}
It is well known that the orthogonality of FBMC/OQAM would be destroyed by the frequency-selective effect of channels \cite{TTMM07}, resulting in inter-symbol interference (ISI) and inter-carrier interference (ICI). To solve this issue, equalization should be implemented at the receiver to restore the orthogonality. Conventionally, it is assumed that the subcarrier filter length is relatively large compared to the channel length. With this assumption, a flat fading channel model can be assumed at each subcarrier \cite{CJRA08}. Thus, the single-tap equalization borrowed from OFDM can be directly implemented in FBMC/OQAM, which relies on linear receivers such as maximum ratio combining (MRC), zero forcing (ZF) and minimum mean squared error (MMSE). However, in highly frequency-selective channels, the above assumption is not accurate, and the performance of the single-tap equalization will be degraded. To improve the equalization accuracy, multi-tap equalization was raised in \cite{TTMM07,DSLJ08}, and the same idea was extended to the MIMO case \cite{AIJL09,TAJM11,MCAI12}. In \cite{AANL15}, frequency spreading equalization (FSE) was applied for FBMC-based MIMO systems at the expense of higher computational complexity, which can be regarded as a kind of multi-tap equalization per subcarrier. For other equalization schemes, \cite{XMDG16} proposed an architecture of transmission and reception to approximate the ideal frequency-selective precoder and linear receiver. The architecture was implemented by linearly combining conventional MIMO linear transceivers, which are applied to sequential derivatives of the original filter bank.

In recent years, researchers find that in massive MIMO, the interference brought by the channels, including ISI, ICI and inter-user interference (IUI), are expected to decrease as the number of BS antennas increases. It proves to be valid even in the case of strong channel selectivity and for conventional single-tap subcarrier equalizers \cite{ANLB14}, which results in low implementation complexity. The so-called ``self-equalization'' property was further analyzed in \cite{ANLD17}, where the authors demonstrated that the signal-to-interference-plus-noise ratio (SINR) performance of FBMC-based massive MIMO system would be limited with infinite number of BS antennas. This is due to the residual interference caused by the correlation between the equalizer taps and the channel impulse responses. Based on this, the same authors proposed efficient power delay profile (PDP) equalizers in \cite{AFFB18} to remove the above limitation, and arbitrarily large SINR values can be achieved. Besides, \cite{FXFJ18} theoretically characterized the mean squared error (MSE) of the estimated symbols based on linear receivers, and provided some insights into the self-equalization property from the perspective of MSE.

\subsection{Motivations}
However, the computational complexity of the scheme in \cite{AIJL09} is proportional to the cube of the number of BS antennas \cite{TAJM11}, and several sets of parallel analysis filter banks (AFB) should be equipped at each BS antenna for the scheme in \cite{XMDG16}. In the case of massive MIMO, the complexity of these two schemes may be unaffordable. The scheme in \cite{TAJM11} designed multi-tap equalizers directly at the symbol rate, and potential loss of channel information caused by FBMC demodulation was not considered, which degrades the system performance. Besides, the limitation analyzed in \cite{ANLD17} only holds when the number of BS antennas grows to infinity. In practice, the number of BS antennas is finite and may not be sufficiently large to completely characterize the interference as PDPs. Thus, the PDP equalizers may perform not well. Moreover, the scheme in \cite{AFFB18} requires not only instantaneous but statistical channel state information (CSI), which increases the complexity of channel estimation. The above findings motivate us to find a more general and effective equalization scheme to overcome the shortcomings of the prior works.

\subsection{Main Contributions}
In this paper, we propose a two-stage design scheme of equalizers with low implementation complexity, which can remove the limitation of the SINR performance effectively while only requiring the instantaneous CSI. Meanwhile, even with finite number of BS antennas, the interference brought by the channels can be almost eliminated. For clarity, we summarize our contributions as follows:
\begin{itemize}
\item We propose a two-stage design scheme of equalizers for uplink multi-user FBMC/OQAM-based massive MIMO system in highly frequency-selective channels. The general idea is that in Stage-1, we design high-rate equalizers working before FBMC demodulation to avoid the potential loss of channel information obtained at the BS. In Stage-2, we transform the high-rate equalizers into the low-rate equalizers after FBMC demodulation to reduce the implementation complexity. The scheme overcomes the shortcomings of the mentioned equalization schemes while performing well.

\item We develop a novel approach to transform the high-rate equalizers into low-rate equalizers in Stage-2, which combines two-step decimation with least squares (LS) method. We find the efficacy of the existing approaches is not ideal and analyze the reasons for it. The proposed approach addresses these issues effectively, and the low-rate equalizers have almost the same performance with the high-rate equalizers.

\item We derive the theoretical expression of SINR performance of the proposed scheme, which shows that the interference caused by the frequency-selective channels can be almost eliminated even with finite number of BS antennas. In this case, by increasing the transmit signal-to-noise ratio (SNR), the SINR value can approach the signal-to-interference ratio (SIR) upper bound, i.e., the SIR in single-input single-output (SISO) scenario when the channel response is unit impulse. All the above analyses are evaluated and confirmed through numerical simulations.

\end{itemize}

The rest of this paper is organized as follows. Section II presents the system model of FBMC/OQAM-based massive MIMO system. Section III proposes a two-stage design scheme of equalizers. Section IV analyzes the performance of the scheme. Section V provides simulation results. Section VI concludes this paper.

\emph{Notation}: Matrices are denoted by bold uppercase letters (e.g., $\mathbf{X}$). Vectors are represented by bold \emph{italic} letters (e.g., $\boldsymbol{x}$). Scalars are denoted by normal font letters (e.g., $x$). $j\triangleq\sqrt{-1}$. The real and complex number fields are denoted by $\mathbb{R}$ and $\mathbb{C}$, respectively. $\lceil x \rceil$ represents the minimum integer no smaller than $x$. $\Re\{x\}$ and $\Im\{x\}$ represent real and imaginary part of a complex number $x$. $x^*$ and $\vert x \vert$ stand for the conjugation and absolute value of $x$, respectively. The distribution of a circularly symmetric complex Gaussian random variable $x$ with mean $\mu$ and variance $\sigma^2$ is denoted by $x\sim\mathcal{CN}(\mu, \sigma^2)$. The linear convolution is denoted by $\star$. $\mathbb{E}[x]$ represents the statistical expectation of $x$. $[\boldsymbol{x}]_a$ and $\Vert\boldsymbol{x}\Vert$ are the $a$-th entry and the two-norm of $\boldsymbol{x}$, respectively. $[\mathbf{X}]_{a,b}$ and $\text{tr}(\mathbf{X})$ are the entry in the $a$-th row and $b$-th column and trace of $\mathbf{X}$, respectively. $(\cdot)^{\rm T}$, $(\cdot)^{\rm H}$ stand for transpose and Hermitian transpose. $\mathbf{I}_a$ denotes an identity matrix with size $a\times a$. $\mathbf{0}$ denotes an all-zero matrix with appropriate dimension.

\section{System Model}

\begin{figure*}
\centering
\includegraphics[width=6in]{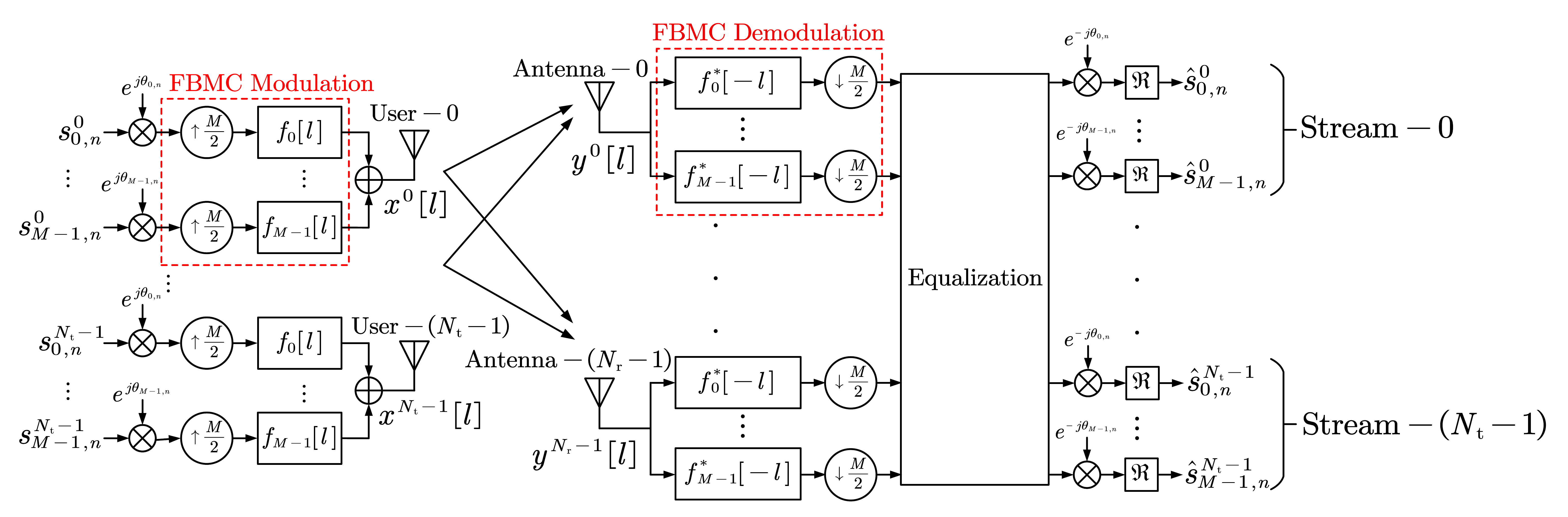}
\centering
\caption{Block diagram of uplink FBMC/OQAM-based massive MIMO system.}
\label{FBMC-MIMO}
\end{figure*}

In this section, we present the uplink model of a single-cell multi-user FBMC/OQAM-based massive MIMO system, which is shown in Fig. \ref{FBMC-MIMO}. Assume that there are $N_{\rm t}$ single-antenna users communicating with an $N_{\rm r}$-antenna BS using the same time-frequency resources. Besides, the total number of subcarriers is assumed to be $M$. For the $u$-th user, the input OQAM symbols $s_{m,n}^u, \forall m,n$ are multiplied by $e^{j\theta_{m,n}}$ to adjust the phase, where $\theta_{m,n}=\frac{\pi}{2}(m+n)$. Accordingly, each symbol has a $\pm\frac{\pi}{2}$ phase difference with its adjacent neighbors in both time and frequency, which can avoid interference between the symbols. Then, the symbols are processed by FBMC modulator which generates the signal $x^u[l]$ that is transmitted by the antenna, where $m$ and $n$ denote subcarrier index and symbol instant, respectively. We assume that the OQAM symbols are independent over $m$, $n$, $u$ and identically distributed (i.i.d) with zero mean and power $P_{\rm s}$ such that $\mathbb{E} \left\{s_{m,n}^u(s_{m,n}^u)^*\right\} = P_{\rm s}$. The OQAM symbols are generated by extracting the real and imaginary parts of the complex QAM symbols according to the rules described in \cite{VISR09}. The duration of each QAM symbol is represented as $T$, while $\frac{T}{2}$ is the duration of each OQAM symbol. In FBMC modulator, the symbol after phase adjustment is upsampled by $\frac{M}{2}$, filtered by the synthesis filter, and then aggregated with the signals at other subcarriers. The synthesis filter at the $m$-th subcarrier is expressed as $f_m[l] = p[l]e^{j\frac{2\pi ml}{M}}$, where $l$ denotes the sample index corresponding to the sampling rate $\frac{M}{T}$. The symmetric real-valued pulse $p[l]$ represents the impulse response of the prototype filter of the FBMC system, which has been designed such that $p[l]\star p^*[-l]$ is a Nyquist pulse. The length of the prototype filter is expressed as $L_{\rm f}=\kappa M$, where $\kappa$ is called the overlapping factor. Thus, the equivalent discrete-time baseband FBMC transmit signal $x^u[l]$ of the $u$-th user can be expressed as
\begin{align}\label{xul}
x^u[l]=\sum_{n=-\infty}^{+\infty}\sum_{m=0}^{M-1} s_{m,n}^u F_{m,n}[l],
\end{align}
where $F_{m,n}[l] \triangleq f_m[l-\frac{nM}{2}] e^{j\theta_{m,n}}$. Specifically, $F_{m,n}[l], \forall m,n$ are orthogonal in the real domain, i.e., $\Re\left\{\sum_{l=-\infty}^{+\infty} F_{m,n}[l] F_{m',n'}^*[l] \right\} = \delta_{mm'}\delta_{nn'}$, where $\delta_{mm'}$ is the Kronecker delta with $\delta_{mm'}=1$ if $m=m'$ and zero otherwise.

Before received by the $r$-th BS antenna, $x^u[l]$ is transmitted through a frequency-selective Rayleigh fading channel $h^{r,u}[l]$ and corrupted by an additive white Gaussian noise (AWGN) $z^r[l]$. We assume that the channel length $L_{\rm h}$ is less than $M$. Besides, for the $u$-th user, the channel responses $h^{r,u}[l] \sim \mathcal{CN}(0, q^u[l])$ for $\forall r, l$, and different taps are assumed to be independent, where $q^u[l]$ represents the channel PDP. We also assume that the channels corresponding to different users and different BS antennas are independent, and the channels are perfectly known at the BS. For $\forall l$, the noise $z^r[l] \sim \mathcal{CN}(0, \sigma_{\rm z}^2)$ and is assumed to be independent over $l$ and $r$.

The aggregate received signal at the $r$-th BS antenna from all users can be expressed as
\begin{align}\label{yrl}
y^r[l] = \sum_{u=0}^{N_{\rm t}-1} \left( x^u[l] \star h^{r,u}[l] \right)+z^r[l].
\end{align}
It is processed by FBMC demodulator and fed to equalization module together with the received signals at other BS antennas. The equalization module can alleviate the impact of frequency-selectivity of the channels and recover $N_{\rm t}$-stream signals of the $N_{\rm t}$ users. The estimate of $s_{m,n}^u$, denoted by $\hat{s}_{m,n}^u$, can be obtained by  compensating the phase and then taking the real part of the signal at the $m$-th subcarrier of the $u$-th stream at the output of the equalization module. FBMC demodulator consists of analysis filters $f_m^*[-l], m=0,\cdots,M-1$ and $\frac{M}{2}$-fold decimators.

\section{Two-stage Design Scheme of Equalizers}\label{MCE}
In order to overcome the shortcomings of the prior works and achieve better equalization performance, our goal is to design multi-tap equalizers working at the symbol rate (i.e., at the OQAM symbol rate of $\frac{2}{T}$ after $\frac{M}{2}$-fold decimation). Hereafter, we call the equalizers working at the symbol rate as low-rate equalizers.

\subsection{General Idea}
Unlike the prior work in \cite{TAJM11}, which designed multi-tap equalizers directly at the symbol rate, we first design equalizers working at a higher symbol rate, then transform the high-rate equalizers into low-rate equalizers. The reason is that the low-rate equalizers equalize the channels based on $\mathcal{D}(y^r[l])$, where $\mathcal{D}(\cdot)$ denotes the operation of FBMC demodulation. According to the data processing inequality, any operation on $y^r[l]$ will result in potential loss of the channel information obtained at the BS, unless it is some sufficient statistics. However, there is no proof that $\mathcal{D}(\cdot)$ has such property. Therefore, the channel information seen by the equalizers is not the exact CSI. To address this issue, we first design the high-rate multi-tap equalizers working before FBMC demodulation.

\subsection{High-rate Equalizers}\label{HE}
Generally, many equalizer design methods can be applied to the first stage. Here, we take the most common linear equalization with ZF and MMSE criteria as an example. Collecting the received signal on all BS antennas, (\ref{yrl}) can be rewritten as
\begin{align}
\boldsymbol{y}[l]=\mathbf{H}[l] \star \boldsymbol{x}[l] + \boldsymbol{z}[l],
\end{align}
where $\boldsymbol{y}[l] \triangleq \left[y^0[l],\cdots,y^{N_{\rm r}-1}[l]\right]^{\rm T}$, $\boldsymbol{x}[l] \triangleq \left[x^0[l],\cdots,x^{N_{\rm t}-1}[l]\right]^{\rm T}$, $\boldsymbol{z}[l] \triangleq \left[z^0[l],\cdots,z^{N_{\rm r}-1}[l]\right]^{\rm T}$. $\mathbf{H}[l]$ is an $N_{\rm r}\times N_{\rm t}$ matrix with the entry in the $r$-th row and $u$-th column given by $h^{r,u}[l]$. Besides, the convolution of $\mathbf{H}[l]$ and $\boldsymbol{x}[l]$ is defined as $\mathbf{H}[l] \star \boldsymbol{x}[l] \triangleq \sum_{\ell=-\infty}^{+\infty} \mathbf{H}[l-\ell] \boldsymbol{x}[\ell]$. Let $\mathbf{G}[l]\in\mathbb{C}^{N_{\rm t}\times N_{\rm r}}$ denote the high-rate equalizer matrix with the number of taps $L_{\rm g}$. $g^{r,u}[l] \triangleq \left[ \mathbf{G}[l] \right]_{r,u}$ is the impulse response of the equalizer to recover the $u$-th user signal from the $r$-th BS antenna. Consequently, the received signal after equalization can be expressed as
\begin{align}\label{equalization}
\boldsymbol{\hat{x}}[l]
=\mathbf{G}[l] \star \boldsymbol{y}[l]
=\mathbf{G}[l] \star \mathbf{H}[l] \star \boldsymbol{x}[l] + \mathbf{G}[l] \star \boldsymbol{z}[l].
\end{align}
We will first derive the expression of $\mathbf{G}[l]$ in frequency domain, which is denoted by $\mathbf{\tilde{G}}(\omega) \triangleq \sum_{l=-\infty}^{+\infty} \mathbf{G}[l] e^{-j\omega l}$. Then, we use the frequency sampling (FS) method to obtain $\mathbf{G}[l]$.

\subsubsection{ZF Criterion}
In order to eliminate the interference of the received signal, it should satisfy $\mathbf{G}[l]\star\mathbf{H}[l]=\mathbf{\Delta}\left[l-\frac{\alpha M}{2}\right]$, where $\mathbf{\Delta}[l]=\mathbf{I}_{N_{\rm t}}$ if $l=0$ and $\mathbf{0}$ otherwise. $\frac{\alpha M}{2}$ is a delay term to optimize the equalizer performance, where $\alpha$ is an integer with $0 \leq \alpha \leq \lceil\frac{2(L_{\rm h}+L_{\rm g}-1)}{M}\rceil-1$, and $\frac{M}{2}$ corresponds to the factor of the subsequent decimation at the receiver. In frequency domain, $\mathbf{\tilde{G}}(\omega)$ can be calculated by
\begin{align}
\mathbf{\tilde{G}}(\omega) = \left( \mathbf{\tilde{H}}^{\rm H}(\omega) \mathbf{\tilde{H}}(\omega) \right)^{-1} \mathbf{\tilde{H}}^{\rm H}(\omega) e^{-j\omega\frac{\alpha M}{2}}.
\end{align}

\subsubsection{MMSE Criterion}
To eliminate the impact of noise, $\mathbf{G}[l]$ is selected to minimize the estimation error $\mathbb{E}\left\{\big\Vert \boldsymbol{\hat{x}}[l] - \boldsymbol{x}\left[l-\frac{\alpha M}{2}\right] \big\Vert^2\right\}$. In frequency domain, $\mathbf{\tilde{G}}(\omega)$ can be calculated by
\begin{align}
\mathbf{\tilde{G}}(\omega)
=\left( \mathbf{\tilde{H}}^{\rm H}(\omega) \mathbf{\tilde{H}}(\omega) + \frac{\sigma_{\rm z}^2}{P_{\rm s}} \mathbf{I}_{N_{\rm t}} \right)^{-1} \mathbf{\tilde{H}}^{\rm H}(\omega) e^{-j\omega\frac{\alpha M}{2}}.
\end{align}

Then, we use the FS method to obtain the equalizers in time domain with finite length. Specifically, $\mathbf{\tilde{G}}(\omega)$ is sampled with equal intervals, i.e., $\mathbf{\tilde{G}}(k) = \mathbf{\tilde{G}}(\frac{2\pi k}{L_{\rm g}}), k=0,\cdots,L_{\rm g}-1$. Thus, $\mathbf{G}[l]$ can be obtained by the inverse discrete Fourier transform (IDFT) of $\mathbf{\tilde{G}}(k)$, i.e.,
\begin{align}
\mathbf{G}[l] = \frac{1}{L_{\rm g}} \sum_{k=0}^{L_{\rm g}-1} \mathbf{\tilde{G}}(k) e^{j\frac{2\pi lk}{L_{\rm g}}}, \quad l=0,\cdots,L_{\rm g}-1.
\end{align}

It is well known that the power of the transmit signal is concentrated on the frequency bins $\frac{2\pi m}{M}, m=0,\cdots,M-1$, which indicates that these frequency bins of the channel responses should be perfectly equalized. Therefore, the minimum number of sampling points (in other words, the equalizer length $L_{\rm g}$) is $M$.

\subsection{From High-rate to Low-rate Equalizers}
In this subsection, we aim to transforming the above high-rate equalizers into low-rate equalizers, since the high-rate equalizers require high quality electronic devices, which results in high power-consumption. Although there are some existing transform methods, their efficacy is not ideal. We first analyze the reasons for the unsatisfactory efficacy. Then, we propose a novel approach and explain how it addresses these issues.

\subsubsection{Existing Methods}
For subcarrier-$m$, $g^{r,u}[l]$ can be replaced by any other high-rate equalizer as long as it has the same frequency response with $g^{r,u}[l]$ in the frequency range $[\frac{2\pi(m-1)}{M},\frac{2\pi(m+1)}{M})$ which corresponds to the pass band of $f_m^*[-l]$. We denote the equalizer that satisfies this condition by $\check{g}_m^{r,u}[l]$. As far as we know, there are mainly two methods to construct $\check{g}_m^{r,u}[l]$, based on which we can obtain the same low-rate equalizer $\bar{g}_m^{r,u}[n]$. Fig. \ref{Method1&2Model} shows the whole evolving process of the receiver from high-rate to low-rate equalizer. The following two paragraphs will illustrate the two methods in detail, respectively. We denote these two kinds of $\check{g}_m^{r,u}[l]$ by $\check{g}_{{\rm (1)},m}^{r,u}[l]$ and $\check{g}_{{\rm (2)},m}^{r,u}[l]$ to distinguish them.

\begin{figure}
\centering
\includegraphics[width=3.2in]{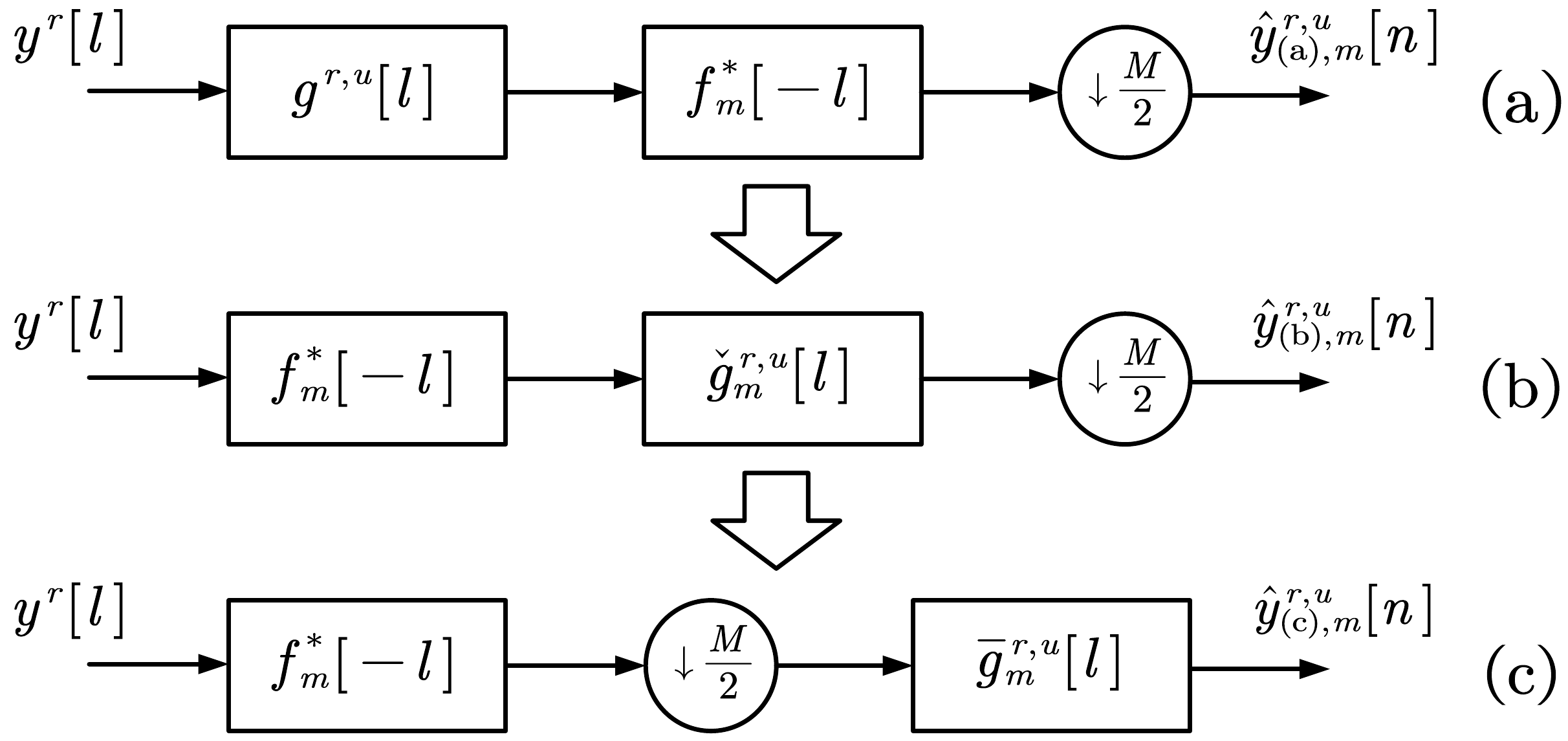}
\centering
\caption{Evolvement of the receiver from channel equalization to subcarrier equalization at subcarrier-$m$.}
\label{Method1&2Model}
\end{figure}

\begin{figure*}[htp]
\centering
\includegraphics[width=5in]{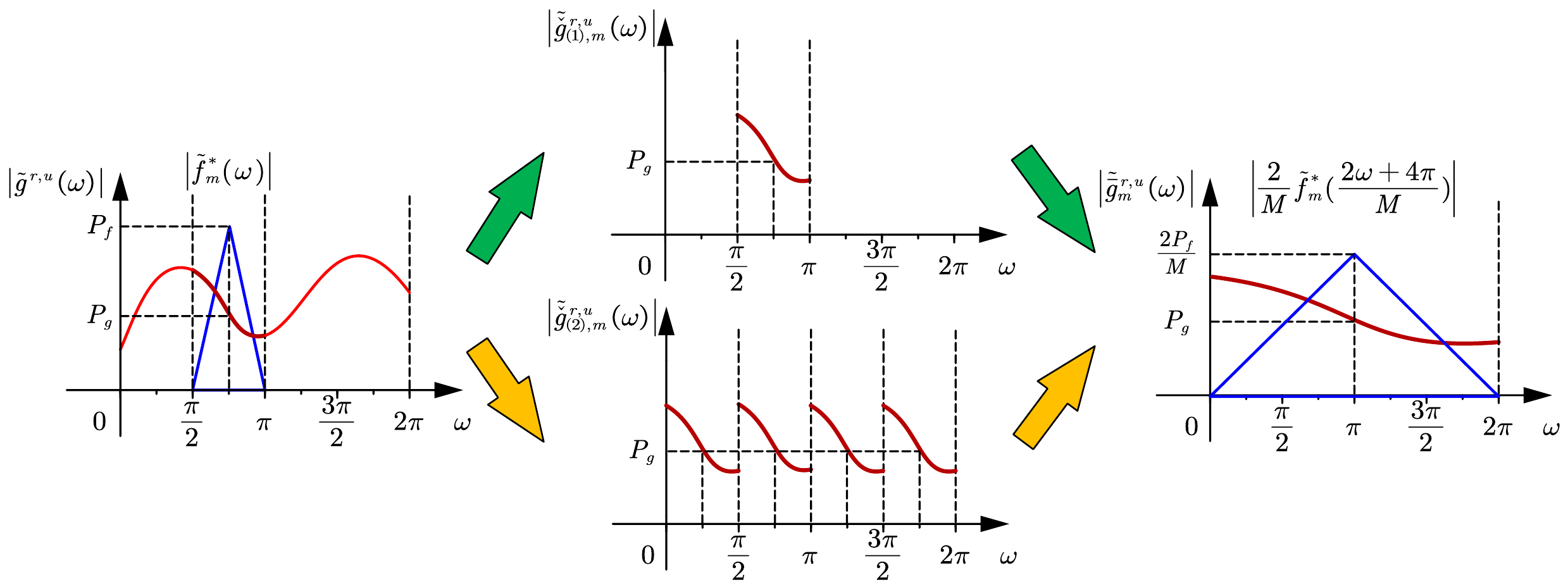}
\centering
\caption{Schematic drawings of equalizers in frequency domain in each evolving stage, where \textbf{Method 1} evolves along upper arrows and \textbf{Method 2} evolves along lower arrows. The parameters are set as $M=8$ and $m=3$.}
\label{Method1&2FR}
\end{figure*}

\textbf{Method 1}: The first method is shown in \cite{AFFB18}, where the authors let the frequency response of $\check{g}_m^{r,u}[l]$ outside the range $[\frac{2\pi(m-1)}{M},\frac{2\pi(m+1)}{M})$ be zero, i.e., $\check{g}_{{\rm (1)},m}^{r,u}[l]=g^{r,u}[l]\star\left(\frac{2}{M}\text{sinc}\left( \frac{2l}{M} \right) e^{j\frac{2\pi ml}{M}}\right)$. Here, $\text{sinc}(t)\triangleq\frac{\sin(\pi t)}{\pi t}$ and $\frac{2}{M}\text{sinc}\left( \frac{2l}{M} \right) e^{j\frac{2\pi ml}{M}}$ acts as a band-pass filter with pass band $[\frac{2\pi(m-1)}{M},\frac{2\pi(m+1)}{M})$. Therefore, $\bar{g}_m^{r,u}[n]$ could be obtained by decimating $\check{g}_{{\rm (1)},m}^{r,u}[l]$ with factor $\frac{M}{2}$ and then multiplying by $\frac{M}{2}$, i.e., $\bar{g}_m^{r,u}[n] = \frac{M}{2}\left(\check{g}_{{\rm (1)},m}^{r,u}[l]\right)_{\downarrow\frac{M}{2}}$.

\textbf{Method 2}: The second method is shown in \cite{DJZZ09}, where the authors proposed that the frequency response of $\check{g}_{{\rm (2)},m}^{r,u}[l]$ was periodic with a smallest period $\frac{4\pi}{M}$. It means that the DTFT of $\check{g}_{{\rm (2)},m}^{r,u}[l]$, which is denoted by $\tilde{\check{g}}_{{\rm (2)},m}^{r,u}(\omega)$, can be formed by repeating $\tilde{g}^{r,u}(\omega)$ in the range of $[\frac{2\pi(m-1)}{M},\frac{2\pi(m+1)}{M})$ across the entire frequency range $[0,2\pi)$ with a period $\frac{4\pi}{M}$. Therefore, $\bar{g}_m^{r,u}[n]$ can be obtained from the $\frac{M}{2}$-fold decimated version of $\check{g}_{{\rm (2)},m}^{r,u}[l]$, i.e., $\bar{g}_m^{r,u}[n] = \left(\check{g}_{{\rm (2)},m}^{r,u}[l]\right)_{\downarrow\frac{M}{2}}$. For clarity, in Fig. \ref{Method1&2FR}, we give the schematic drawings of each evolving stage in frequency domain.

\begin{proposition}\label{(a)=(c)}
Through the above two methods, the same low-rate equalizer $\bar{g}_m^{r,u}[n]$ can be obtained. Besides, the receivers shown in Fig. \ref{Method1&2Model}-(a) and Fig. \ref{Method1&2Model}-(c) are equivalent from the perspective of frequency domain.
\end{proposition}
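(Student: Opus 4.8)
The plan is to reduce both assertions to the single frequency-domain identity governing $\frac{M}{2}$-fold decimation and then to do careful aliasing bookkeeping; no deeper machinery is needed. Write $D\triangleq\frac{M}{2}$ and recall that if $v[n]=u[Dn]$ then
\[
\tilde{v}(\omega)=\frac{1}{D}\sum_{k=0}^{D-1}\tilde{u}\!\left(\frac{\omega-2\pi k}{D}\right)
=\frac{2}{M}\sum_{k=0}^{M/2-1}\tilde{u}\!\left(\frac{2(\omega-2\pi k)}{M}\right).
\]
The whole proof is an analysis of which of these $M/2$ alias terms survive. The key geometric fact I would establish first is that the alias arguments $\frac{2\omega}{M}-\frac{4\pi k}{M}$, $k=0,\dots,\frac{M}{2}-1$, are $\frac{M}{2}$ points spaced exactly $\frac{4\pi}{M}$ apart, hence they tile $[0,2\pi)$, while the band $\mathcal{B}_m\triangleq[\frac{2\pi(m-1)}{M},\frac{2\pi(m+1)}{M})$ has width exactly $\frac{4\pi}{M}$.

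For the first claim, I would write out the two intermediate equalizers: by construction $\tilde{\check{g}}_{{\rm (1)},m}^{r,u}(\omega)$ equals $\tilde{g}^{r,u}(\omega)$ on $\mathcal{B}_m$ and vanishes elsewhere, whereas $\tilde{\check{g}}_{{\rm (2)},m}^{r,u}(\omega)$ is the $\frac{4\pi}{M}$-periodic extension of the same band content. Applying the decimation formula (with the extra factor $\frac{M}{2}$ prescribed by Method~1, and with factor $1$ for Method~2), the tiling fact forces a dichotomy: for Method~1 exactly one alias point lands in $\mathcal{B}_m$, so a single term survives and the prefactor $\frac{M}{2}$ cancels the $\frac{2}{M}$; for Method~2 the periodicity makes all $\frac{M}{2}$ terms equal, so summing them restores a factor $\frac{M}{2}$ that again cancels the $\frac{2}{M}$. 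Both routes therefore return the identical frequency-stretched copy of $\tilde{g}^{r,u}$ restricted to $\mathcal{B}_m$, i.e. $\bar{g}_{{\rm (1)},m}^{r,u}[n]=\bar{g}_{{\rm (2)},m}^{r,u}[n]=\bar{g}_m^{r,u}[n]$.

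For the equivalence of configurations (a) and (c) I would track a single $(r,u,m)$ branch from $y^r[l]$ to the subcarrier-$m$ output. In (a) the signal is filtered by $g^{r,u}[l]$, then by the analysis filter $f_m^*[-l]$, then decimated by $\frac{M}{2}$. First replace $g^{r,u}[l]$ by $\check{g}_m^{r,u}[l]$: this changes nothing, since the two agree on $\mathcal{B}_m$, which is precisely the pass band of $f_m^*[-l]$, so the product with $\tilde{f}_m^*(\omega)$ is unchanged. Next apply the decimation formula to $\tilde{\check{g}}_m^{r,u}(\omega)\tilde{f}_m^*(\omega)\tilde{y}^r(\omega)$; because $\tilde{f}_m^*$ confines support to $\mathcal{B}_m$, only the single alias landing in $\mathcal{B}_m$ contributes, and on that term $\tilde{\check{g}}_m^{r,u}$ factors out as exactly $\tilde{\bar{g}}_m^{r,u}(\omega)$ (this is the Part-1 computation read backwards). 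Pulling this factor outside the alias sum leaves the decimated $\tilde{f}_m^*\tilde{y}^r$, which is the FBMC-demodulated signal of (c), now multiplied at the low rate by $\tilde{\bar{g}}_m^{r,u}(\omega)$ — precisely configuration (c). This is the noble-identity step that moves the equalizer across the decimator.

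I expect the main obstacle to be the aliasing bookkeeping rather than any conceptual difficulty: one must verify that the $\frac{M}{2}$ scaled-and-shifted copies genuinely tile the circle with no gap, that the half-open width $\frac{4\pi}{M}$ of $\mathcal{B}_m$ yields exactly one survivor with no double counting at the band edges, and that the three scaling constants — the $\frac{M}{2}$ in Method~1, the implicit $1$ in Method~2, and the $\frac{2}{M}$ from decimation — combine to give matching transfer functions. Identifying the single surviving index $k^\ast$ consistently (via the modular reduction of $\frac{2\omega}{M}$ into $\mathcal{B}_m$) across both parts is the one place where care is needed; once the ``exactly one alias in $\mathcal{B}_m$ versus all aliases coincide'' dichotomy is pinned down, both statements follow at once.
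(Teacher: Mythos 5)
Your proposal is correct and follows essentially the same route as the paper's Appendix A: both rest on the DTFT decimation (aliasing) formula together with the band-limitation of $\tilde{f}_m^*(\omega)$ and $\tilde{\check{g}}_m^{r,u}(\omega)$ to the width-$\frac{4\pi}{M}$ band $[\frac{2\pi(m-1)}{M},\frac{2\pi(m+1)}{M})$, which is exactly what the paper uses to kill the cross-alias terms in (\ref{ineqi}), and on the $\frac{4\pi}{M}$-periodicity of Method 2's extension, which the paper handles by reindexing the double alias sum in (\ref{tildegmru2}). Your pointwise ``single surviving alias'' bookkeeping and the paper's disjoint-support cancellation are two phrasings of the same fact, so the two proofs coincide in substance.
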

\begin{proof}
Please refer to Appendix \ref{Proof(a)=(c)}.
\end{proof}

\subsubsection{Reasons for the Unsatisfactory Efficacy}\label{SE}
However, we find that the performance of $\bar{g}_m^{r,u}[n]$ obtained by the aforementioned methods is not satisfying. The possible reasons are as follows:

\textbf{R1}: In practical applications, no matter how well the prototype filter is designed, its frequency response always has some OOB emissions, though they seem fairly tiny;

\textbf{R2}: The band-pass filter is not ideal, and there is a deviation from the desired filtering results;

Due to \textbf{R1} and \textbf{R2}, $f_m^*[-l]$ and $\tilde{\check{g}}_m^{r,u}(\omega)$ are actually not strictly band-limited in the range $[\frac{2\pi(m-1)}{M},\frac{2\pi(m+1)}{M})$. Thus, (\ref{ineqi}) in Appendix \ref{Proof(a)=(c)} does not hold any more.

\textbf{R3}: There exists a response jump with a relatively high probability, which can be found explicitly in Fig. \ref{Method1&2FR}. To be more specific, for our desired subcarrier equalizer response $\tilde{\bar{g}}_m^{r,u}(\omega)$, $\tilde{\bar{g}}_m^{r,u}(0)$ is not necessarily equal to $\tilde{\bar{g}}_m^{r,u}(2\pi)$. Therefore, when transforming $\tilde{\bar{g}}_m^{r,u}(\omega)$ into $\bar{g}_m^{r,u}[n]$, the approximation error will be relatively large in the jump area.

\subsubsection{Proposed Approach}
For \textbf{R1}, it can be considered that $\tilde{F}_m^*(\omega)$ is band-limited in a wider range than $[\frac{2\pi(m-1)}{M},\frac{2\pi(m+1)}{M})$ to take the main emissions into the range, thus avoiding the loss of the information of $g^{r,u}[l]$. From this perspective, we can extend \textbf{Method 1} and give the following proposition.
\begin{proposition}
If $\tilde{F}_m^*(\omega)$ is limited in $[\frac{\pi(m-1)}{D_1},\frac{\pi(m+1)}{D_1})$ in frequency domain, the cascade of $g^{r,u}[l]$, $f_m^*[-l]$ and $D_1$-fold decimator is equivalent to the cascade of $f_m^*[-l]$, $D_1$-fold decimator and $\bar{g}_m^{r,u}[n]$. Here, $D_1=\frac{M}{2^\eta}$, and $\eta$ is an integer ranging from $1$ to $\log_2M$. $\bar{g}_m^{r,u}[n]$ is obtained by decimating $\check{g}_m^{r,u}[l]$ with the factor $D_1$ and then multiplying by $D_1$. $\check{g}_m^{r,u}[l]$ is obtained by passing $g^{r,u}[l]$ through an ideal band-pass filter with pass-band $[\frac{\pi(m-1)}{D_1},\frac{\pi(m+1)}{D_1})$.
\end{proposition}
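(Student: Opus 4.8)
The plan is to carry out the argument entirely in the frequency domain, following the template of the proof of Proposition \ref{(a)=(c)} but now for the general decimation factor $D_1=\frac{M}{2^\eta}$. Writing $\tilde{g}^{r,u}(\omega)$ and $\tilde{F}_m^*(\omega)$ for the DTFTs of $g^{r,u}[l]$ and $f_m^*[-l]$, I would first use the commutativity of LTI filtering to treat the left-hand cascade as filtering by the single pre-decimation response $\tilde{g}^{r,u}(\omega)\tilde{F}_m^*(\omega)$ followed by $D_1$-fold decimation. The band-limiting hypothesis---that $\tilde{F}_m^*(\omega)$ vanishes outside $[\frac{\pi(m-1)}{D_1},\frac{\pi(m+1)}{D_1})$---then lets me replace $g^{r,u}$ by $\check{g}_m^{r,u}$ at no cost: since the ideal band-pass filter defining $\check{g}_m^{r,u}$ leaves $\tilde{g}^{r,u}$ unchanged on that interval and sets it to zero elsewhere, while $\tilde{F}_m^*$ is itself zero outside the interval, we have $\tilde{g}^{r,u}(\omega)\tilde{F}_m^*(\omega)=\tilde{\check{g}}_m^{r,u}(\omega)\tilde{F}_m^*(\omega)$ for every $\omega$. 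Hence the left-hand cascade is unchanged if $g^{r,u}$ is pre-filtered to $\check{g}_m^{r,u}$.

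Next I would apply the standard aliasing formula for $D_1$-fold decimation, under which the output spectrum is $\frac{1}{D_1}\sum_{k=0}^{D_1-1}(\cdot)\!\left(\frac{\omega-2\pi k}{D_1}\right)$. The key structural fact is that both $\tilde{\check{g}}_m^{r,u}$ and $\tilde{F}_m^*$ are now confined to a band of width exactly $\frac{2\pi}{D_1}$, so decimation by $D_1$ is critical: the $D_1$ aliased replicas tile $[0,2\pi)$ with neither overlap nor gap, and for each output frequency $\omega$ exactly one shift index $k(\omega)$ contributes. This collapses the aliasing sum to the single surviving term $\frac{1}{D_1}\,\tilde{A}(\tfrac{\omega-2\pi k(\omega)}{D_1})\,\tilde{\check{g}}_m^{r,u}(\tfrac{\omega-2\pi k(\omega)}{D_1})\,\tilde{F}_m^*(\tfrac{\omega-2\pi k(\omega)}{D_1})$, where $\tilde{A}$ is the input spectrum, and likewise collapses the $f_m^*[-l]$-then-decimator front end of the right-hand cascade to a single term with the same index $k(\omega)$.

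It then remains to identify $\bar{g}_m^{r,u}[n]$. Computing the DTFT of $\bar{g}_m^{r,u}[n]=D_1(\check{g}_m^{r,u}[l])_{\downarrow D_1}$ gives $\tilde{\bar{g}}_m^{r,u}(\omega)=\sum_{p=0}^{D_1-1}\tilde{\check{g}}_m^{r,u}\!\left(\frac{\omega-2\pi p}{D_1}\right)$, and by the same band-limiting only the term $p=k(\omega)$ survives, so $\tilde{\bar{g}}_m^{r,u}(\omega)=\tilde{\check{g}}_m^{r,u}(\tfrac{\omega-2\pi k(\omega)}{D_1})$. Substituting this into the collapsed right-hand expression reproduces the factor $\tilde{\check{g}}_m^{r,u}(\tfrac{\omega-2\pi k(\omega)}{D_1})$ that multiplies the common term $\tilde{A}\,\tilde{F}_m^*$ in the left-hand expression, and the two output spectra coincide for all $\omega$. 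This establishes the claimed equivalence, of which Proposition \ref{(a)=(c)} (Method 1) is the special case $\eta=1$, $D_1=\frac{M}{2}$.

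I expect the main obstacle to be making the critical-sampling step fully rigorous---i.e.\ proving that the passband of width $\frac{2\pi}{D_1}$ produces exactly $D_1$ non-overlapping replicas so that $k(\omega)$ is well defined and identical on both sides. Two subtleties need care: the band $[\frac{\pi(m-1)}{D_1},\frac{\pi(m+1)}{D_1})$ is in general not aligned to the decimation grid (for even $m$ it straddles a multiple of $\frac{2\pi}{D_1}$), and for large $m$ it wraps around $2\pi$. Since only the bandwidth, and not the position, governs the non-overlap, and since all DTFTs are $2\pi$-periodic, these do not break the argument; I would make the bookkeeping precise by the same no-aliasing inequality used at (\ref{ineqi}) in Appendix \ref{Proof(a)=(c)}, now with $\frac{M}{2}$ replaced by $D_1$.
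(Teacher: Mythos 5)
Your proposal is correct and takes essentially the same approach as the paper, whose proof of this proposition is simply the frequency-domain argument of \emph{Part 1} in Appendix~\ref{Proof(a)=(c)} with $\frac{M}{2}$ replaced by $D_1$: apply the decimation aliasing formula, use $\tilde{g}^{r,u}(\omega)\tilde{F}_m^*(\omega)=\tilde{\check{g}}_m^{r,u}(\omega)\tilde{F}_m^*(\omega)$, and invoke band-limitation to kill the off-diagonal terms. Your ``critical sampling / exactly one surviving replica per frequency'' tiling argument is just a restatement of the paper's key observation (\ref{ineqi}) that shifted replicas with distinct indices have disjoint supports, so the two proofs coincide in substance.
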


\begin{figure}[b]
\centering
\includegraphics[width=3.5in]{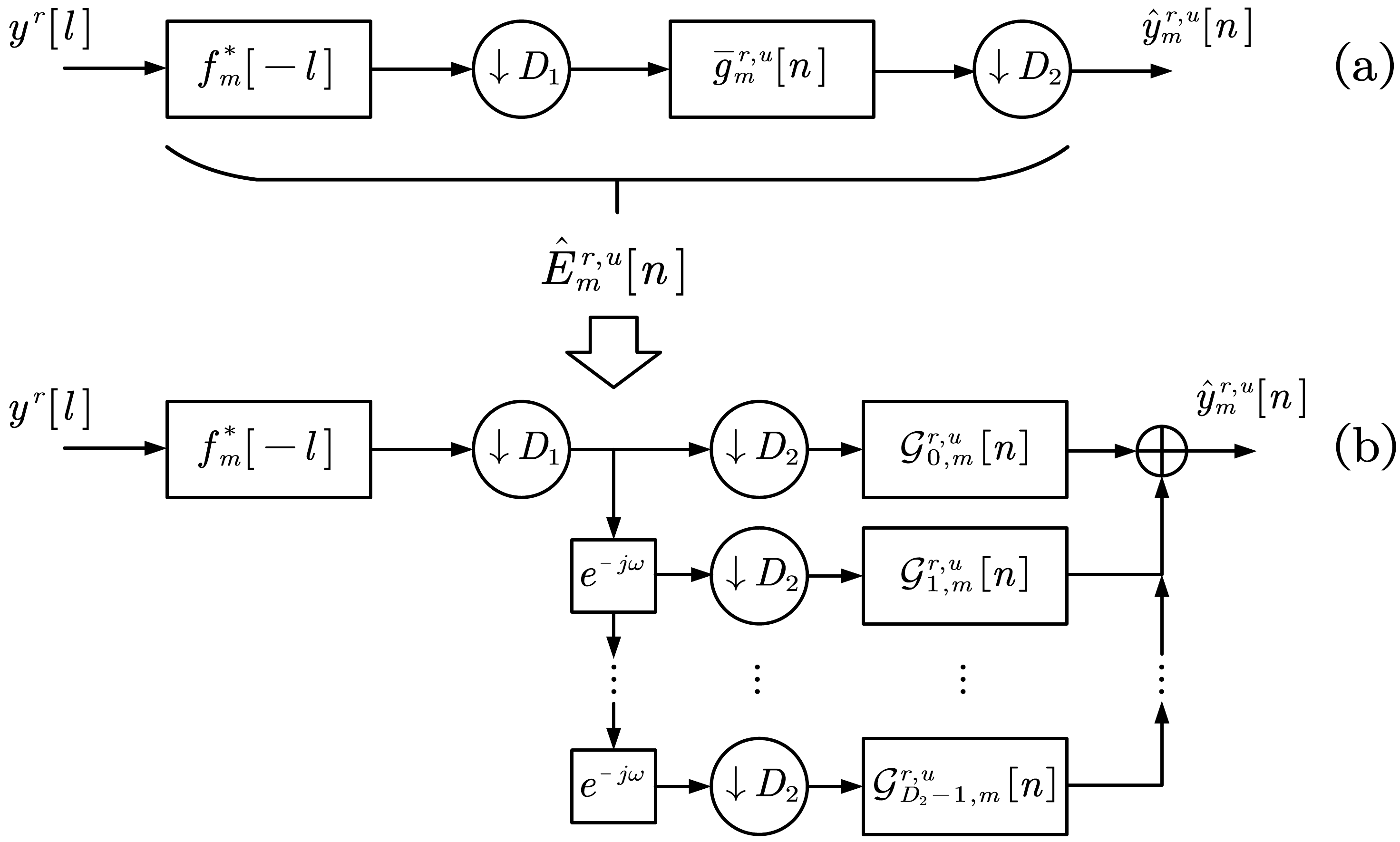}
\centering
\caption{Block diagram of the two-step decimation.}
\label{TwostepDecimation}
\end{figure}

The proof is similar with \emph{Part 1} in Appendix \ref{Proof(a)=(c)}. In this way, $\bar{g}_m^{r,u}[n]$ works at a higher symbol rate. For example, when $D_1=\frac{M}{4}$, the equalizer works at twice symbol rate; when $D_1=\frac{M}{8}$, it works at quadruple symbol rate. Therefore, a $D_2$-fold decimator should be added after $\bar{g}_m^{r,u}[n]$ to recover the symbol rate, and $D_2$ should satisfy $D_1D_2=\frac{M}{2}$. The receiver corresponding to this so called two-step decimation is shown in Fig. \ref{TwostepDecimation}-(a), where the equivalent channel between $y^r[l]$ and $\hat{y}_m^{r,u}[n]$ is expressed as $\hat{E}_m^{r,u}[n]= \left( \left( f_m^*[-l] \right)_{\downarrow D_1} \star \bar{g}_m^{r,u}[n] \right)_{\downarrow D_2}$.
To let the equalizers work at the symbol rate, a further implementation of the two-step decimation is shown in Fig. \ref{TwostepDecimation}-(b). It can be seen that $\bar{g}_m^{r,u}[n]$ is divided into $D_2$ equalizers $\mathcal{G}_{\ell,m}^{r,u}[n], \ell=0,\cdots,D_2-1$, and each equalizer is obtained by $\mathcal{G}_{\ell,m}^{r,u}[n] = \bar{g}_m^{r,u} [D_2n+\ell]$. The explanation is that the DTFT of $\bar{g}_m^{r,u}[n]$ can be rewritten as
\begin{align}
\tilde{\bar{g}}_m^{r,u}(\omega)
=\sum_{\ell=0}^{D_2-1} e^{-j\omega\ell} \tilde{\mathcal{G}}_{\ell,m}^{r,u} (D_2\omega),
\end{align}
where $\tilde{\mathcal{G}}_{\ell,m}^{r,u}(\omega) \triangleq \sum_{n=-\infty}^{+\infty} \bar{g}_m^{r,u} [D_2n+\ell] e^{-j\omega n}$. Moreover, according to noble identity in multi-rate filter bank theory \cite{PPV93}, the cascade of $\tilde{\mathcal{G}}_{\ell,m}^{r,u} (D_2\omega)$ and $D_2$-fold decimator can be altered by the cascade of $D_2$-fold decimator and $\tilde{\mathcal{G}}_{\ell,m}^{r,u} (\omega)$. Therefore, the receiver shown in Fig. \ref{TwostepDecimation}-(b) is equivalent to that in Fig. \ref{TwostepDecimation}-(a).

For \textbf{R2} and \textbf{R3}, we use the LS method to let the performance of $\bar{g}_m^{r,u}[n]$ approach that of $g^{r,u}[l]$. To be specific, $\tilde{\bar{g}}_m^{r,u}(\omega)$ is selected to minimize the estimation error, i.e.,
\begin{align}
\tilde{\bar{g}}_m^{r,u}(\omega) = \argmin \frac{1}{2\pi} \int_{0}^{2\pi} \left|\tilde{\hat{E}}_{{\rm (a)},m}^{r,u}(\omega) - \tilde{\hat{E}}_{{\rm (c)},m}^{r,u}(\omega) \right|^2 d\omega.
\end{align}
According to Parseval's theorem, it should satisfy
\begin{align}\label{bargmruargmin}
\bar{g}_m^{r,u}[n] = \argmin \sum_{n=-\infty}^{+\infty} \left|\hat{E}_{{\rm (a)},m}^{r,u}[n] - \hat{E}_{{\rm (c)},m}^{r,u}[n] \right|^2,
\end{align}
where $\hat{E}_{{\rm (a)},m}^{r,u}[n] = \left( g^{r,u}[l] \star f_m^*[-l] \right)_{\downarrow D_1}$, $\hat{E}_{{\rm (c)},m}^{r,u}[n] = \left( f_m^*[-l] \right)_{\downarrow D_1} \star \bar{g}_m^{r,u}[n]$. $\tilde{\hat{E}}_{{\rm (a)},m}^{r,u}(\omega)$ and $\tilde{\hat{E}}_{{\rm (c)},m}^{r,u}(\omega)$ are the DTFT of $\hat{E}_{{\rm (a)},m}^{r,u}[n]$ and $\hat{E}_{{\rm (c)},m}^{r,u}[n]$, respectively.
Assume that the length of $\bar{g}_m^{r,u}[n]$ is $L'_{\rm g}$. We denote $\mathbf{F}_m^{\rm cir}$ as a circular matrix with $\frac{L_{\rm f}}{D_1} + L'_{\rm g}-1$ rows and $L'_{\rm g}$ columns. Besides, its first column is given by zero-padded version of $\left(f_m^*[-l]\right)_{\downarrow D_1}$

$\boldsymbol{\bar{g}}_m^{r,u}$ and $\boldsymbol{\hat{e}}_m^{r,u}$ are column vectors with size $L'_{\rm g}$ and $\frac{L_{\rm f}}{D_1} + L'_{\rm g}-1$, respectively. $\boldsymbol{\bar{g}}_m^{r,u}$ is composed by $\bar{g}_m^{r,u}[n]$. If $L'_{\rm g} < \frac{L_{\rm g}}{D_1}$, $\boldsymbol{\hat{e}}_m^{r,u}$ is composed by the first $ \frac{L_{\rm f}}{D_1} + L'_{\rm g}-1$ entries of $\hat{E}_{{\rm (a)},m}^{r,u}[n]$, otherwise, by zero-padded version of $\hat{E}_{{\rm (a)},m}^{r,u}[n]$. In this way, (\ref{bargmruargmin}) can be further transformed into
\begin{align}\label{LS}
\boldsymbol{\bar{g}}_m^{r,u}
=\argmin \left\Vert \mathbf{F}_m^{\rm cir} \boldsymbol{\bar{g}}_m^{r,u} - \boldsymbol{\hat{e}}_m^{r,u} \right\Vert^2
=\mathbf{\bar{F}}_m \boldsymbol{\hat{e}}_m^{r,u},
\end{align}
where $\mathbf{\bar{F}}_m \triangleq \left( \left(\mathbf{F}_m^{\rm cir}\right)^{\rm H} \mathbf{F}_m^{\rm cir} \right)^{-1} \left(\mathbf{F}_m^{\rm cir}\right)^{\rm H}$. Notice that if the analysis filters have been designed, $\mathbf{\bar{F}}_m^{\rm cir}$ will be fixed. Thus, the computational complexity mainly comes from the calculation of $\boldsymbol{\hat{e}}_m^{r,u}$ and multiplication between the fixed matrix and $\boldsymbol{\hat{e}}_m^{r,u}$. By combining the above two-step decimation and the LS method, we can maintain the SINR performance while reducing the implementation complexity.

\section{Performance Analysis}\label{chara}
In this section, we analyze the SINR performance of the proposed scheme based on the high-rate equalizers. Although the final low-rate equalizers are not completely equivalent to the high-rate equalizers, we will show that the performance gap between them is negligible in the subsequent simulation results. Besides, the complexity of calculating the equalizer coefficients are also given in this section.

\subsection{SNIR Performance Analysis}\label{SP}
According to the law of large numbers, it can be easily derived that when $\alpha=1$ and $M\geq2(L_{\rm h}-1)$, as the number of BS antennas tends to infinity, $\mathbf{G}[l]\star\mathbf{H}[l]$ and $P_{{\rm z},m,n}^u$ converge almost surely to $\mathbf{\Delta}\left[l-\frac{\alpha M}{2}\right]$ and $0$, respectively. Here, $P_{{\rm z},m,n}^u$ denotes the power of the noise $z_{m,n}^u$ contained in the final estimated data symbol $\hat{s}_{m,n}^u$. It indicates that the impacts of the channels and noise can be completely eliminated with infinite number of BS antennas, and arbitrary SINR can be achieved theoretically.

However, in practice, the number of BS antennas is not large enough to be regarded as infinity. When $N_{\rm r}$ is finite, these impacts may not be completely eliminated. In this case, we use a variety of random variables to characterize the equalization error, which can be expressed by
\begin{align}
\mathbf{G}[l] \star \mathbf{H}[l] = \mathbf{\Delta}\left[l-\frac{\alpha M}{2}\right] + \mathbf{\Psi}[l],
\end{align}
where $\mathbf{\Psi}[l]$ is an $N_{\rm t}\times N_{\rm t}$ complex random matrix which represents the equalization error. Based on this, we will derive the expression of SINR to reveal the relation between $\mathbf{\Psi}[l]$ and SINR. By introducing the multi-tap equalizers, the demodulated signal $\boldsymbol{d}_{m,n}$ before taking the real part can be written as
\begin{align}\label{dmnnew}
\boldsymbol{d}_{m,n}=\sum_{n'=-\infty}^{+\infty}\sum_{m'=0}^{M-1} \mathbf{E}_{mm',nn'}\boldsymbol{s}_{m',n'}+\boldsymbol{z}_{m,n},
\end{align}
where
\begin{align}
\nonumber
\mathbf{E}_{mm',nn'}
\triangleq&\mathbf{E}_{mm'}[n-(n'-\alpha)] e^{j(\theta_{m',n'}-\theta_{m,n})},\\
\nonumber
\mathbf{E}_{mm'}[n]
\triangleq&\left( \mathcal{F}_{mm'}[l] \mathbf{I}_{N_{\rm t}} \star \mathbf{G}[l] \star \mathbf{H}[l] \right)_{\downarrow\frac{M}{2}},\\
\nonumber
\mathcal{F}_{mm'}[l]
\triangleq&f_{m'}[l] \star f_m^*[-l],\\
\nonumber
\boldsymbol{s}_{m,n} \triangleq & [s_{m,n}^0,\cdots,s_{m,n}^{N_{\rm t}-1}]^{\rm T}.
\end{align}
The subscript $\downarrow D$ denotes the $D$-fold decimation. As shown in Fig. \ref{FBMC-MIMO}, the interference coefficients are given by the real part of the entries of $\mathbf{E}_{mm',nn'}$, which is denoted by $\mathbf{R}_{mm',nn'}$. Besides, let $\boldsymbol{\hat{z}}_{m,n}$ denote the real part of the entries of $\boldsymbol{z}_{m,n}$. Accordingly, the estimate of $\boldsymbol{s}_{m,n}$ can be formulated as
\begin{align}
\boldsymbol{\hat{s}}_{m,n}=\sum_{n'=-\infty}^{+\infty}\sum_{m'=0}^{M-1} \mathbf{R}_{mm',nn'}\boldsymbol{s}_{m',n'}+\boldsymbol{\hat{z}}_{m,n}.
\end{align}
The entry of $\mathbf{R}_{mm',nn'}$ in the $u$-th row and $u'$-th column can be further formulated as
\begin{align}
\nonumber
R_{mm',nn'}^{uu'} = &\left( \boldsymbol{\check{\psi}}^{uu'} \right)^{\rm T} \boldsymbol{\check{\mathcal{F}}}_{mm',nn'}\\
&+\Re\left\{ \mathcal{F}_{mm',nn'}\left[\frac{\alpha M}{2}\right] \right\}\delta_{uu'},
\end{align}
where
\begin{align}
\nonumber
\boldsymbol{\check{\psi}}^{uu'}
&\triangleq \Big[ \Re\Big\{\psi^{uu'}[0],\cdots,\psi^{uu'}[M+L_{\rm h}-2] \Big\} ,\\
\nonumber
&\Im\Big\{\psi^{uu'}[0],\cdots,\psi^{uu'}[M+L_{\rm h}-2] \Big\} \Big]^{\rm T},\\
\nonumber
\boldsymbol{\check{\mathcal{F}}}_{mm',nn'}
&\triangleq \Big[ \Re\Big\{\mathcal{F}_{mm',nn'}[0], \cdots, \mathcal{F}_{mm',nn'}[M+L_{\rm h}-2] \Big\} ,\\
\nonumber
&\mspace{-10mu}-\Im\Big\{\mathcal{\mathcal{F}}_{mm',nn'}[0], \cdots, \mathcal{F}_{mm',nn'}[M+L_{\rm h}-2] \Big\} \Big]^{\rm T},
\end{align}
$\mathcal{F}_{mm',nn'}[l] \triangleq \mathcal{F}_{mm'}\left[\frac{(n-n'+\alpha)M}{2}-l\right] e^{j(\theta_{m',n'}-\theta_{m,n})}$, $\psi^{uu'}[l] \triangleq \left[ \mathbf{\Psi}[l] \right]_{u,u'}$. Therefore, the instantaneous power corresponding to $R_{mm',nn'}^{uu'}$ can be calculated as
\begin{align}
\nonumber
&\mathrel{\phantom{=}}P_{mm',nn'}^{uu'}\\
\nonumber
&=\left( R_{mm',nn'}^{uu'} \right)^2 P_{\rm s}\\
\nonumber
&=2\Re\left\{ \mathcal{F}_{mm',nn'}\left[\frac{\alpha M}{2}\right] \right\} P_{\rm s} \delta_{uu'} \left( \boldsymbol{\check{\psi}}^{uu'} \right)^{\rm T} \boldsymbol{\check{\mathcal{F}}}_{mm',nn'}\\
\nonumber
&\mathrel{\phantom{=}}+\left( \boldsymbol{\check{\psi}}^{uu'} \right)^{\rm T} \boldsymbol{\check{\mathcal{F}}}_{mm',nn'} \left( \boldsymbol{\check{\mathcal{F}}}_{mm',nn'} \right)^{\rm T} \boldsymbol{\check{\psi}}^{uu'} P_{\rm s}\\
&\mathrel{\phantom{=}}+\left(\Re\left\{ \mathcal{F}_{mm',nn'}\left[\frac{\alpha M}{2}\right] \right\}\right)^2 P_{\rm s} \delta_{uu'}.
\end{align}
The average power, with averaging over different channel realizations, can be calculated as $\bar{P}_{mm',nn'}^{uu'}=\mathbb{E}\left\{ P_{mm',nn'}^{uu'} \right\}$. Accordingly, the SINR at the $m$-th subcarrier and $n$-th symbol instant of the $u$-th user is defined as (\ref{SINR}) shown at the top of the next page.
\begin{figure*}[!t]
\normalsize
\begin{align}\label{SINR}
\text{SINR}_{m,n}^{u} \triangleq \frac{\bar{P}_{mm,nn}^{uu}} {\mathop{\sum_{n'=-\infty}^{+\infty} \sum_{m'=0}^{M-1}}\limits_{(m',n')\neq(m,n)} \bar{P}_{mm',nn'}^{uu'} + \sum_{n'=-\infty}^{+\infty} \sum_{m'=0}^{M-1} \mathop{\sum_{u'=0}^{N_{\rm t}-1}}\limits_{u'\neq u} \bar{P}_{mm',nn'}^{uu'} +\bar{P}_{{\rm z},m,n}^u}
\end{align}
\hrulefill
\vspace*{4pt}
\end{figure*}
From the equation, we can see that the interference coefficients depend on the values of $\bar{P}_{mm',nn'}^{uu'}$, which contain the statistics of $\psi^{uu'}[l]$. Taking the equalizers designed under ZF criterion as an example, in Appendix B in supplementary file, we roughly calculate the statistics of $\psi^{uu'}[l]$ due to the accurate values are hard to obtain. The conclusions are also applicable to the case of MMSE criterion. Thus, the average power $\bar{P}_{mm',nn'}^{uu'}$ can be calculated as
\begin{align}\label{barP}
\nonumber
\bar{P}_{mm',nn'}^{uu'}=
&\text{tr} \left\{ \boldsymbol{\check{\mathcal{F}}}_{mm',nn'} \left( \boldsymbol{\check{\mathcal{F}}}_{mm',nn'} \right)^{\rm T} \mathbf{\check{\Psi}}^{uu'} \right\} P_{\rm s}\\
&+\left(\Re\left\{ \mathcal{F}_{mm',nn'}\left[\frac{\alpha M}{2}\right] \right\}\right)^2 P_{\rm s} \delta_{uu'},
\end{align}
where
\begin{align}
\nonumber
\mathbf{\check{\Psi}}^{uu'}
\triangleq&\mathbb{E}\left\{ \left( \boldsymbol{\check{\psi}}^{uu'} - \mathbb{E}\left\{ \boldsymbol{\check{\psi}}^{uu'} \right\} \right) \left( \boldsymbol{\check{\psi}}^{uu'} - \mathbb{E}\left\{ \boldsymbol{\check{\psi}}^{uu'} \right\} \right)^{\rm T} \right\}\\
\nonumber
=&\frac{1}{2}\begin{bmatrix}
\Re\left\{ \mathbf{\Xi}^{uu'} + \mathbf{\check{\Xi}}^{uu'} \right\} & \Im\left\{ -\mathbf{\Xi}^{uu'} + \mathbf{\check{\Xi}}^{uu'} \right\} \\
\Im\left\{ \mathbf{\Xi}^{uu'} + \mathbf{\check{\Xi}}^{uu'} \right\} & \Re\left\{ \mathbf{\Xi}^{uu'} - \mathbf{\check{\Xi}}^{uu'} \right\}
\end{bmatrix}
\end{align}
represents the interference brought by the equalization error, and the entries of $\mathbf{\Xi}^{uu'}$ and $\mathbf{\check{\Xi}}^{uu'}$ in the $l$-th row and $l'$-th column are $\varepsilon_{ll'}^{uu'}$ and $\check{\varepsilon}_{ll'}^{uu'}$, respectively. The expressions of $\varepsilon_{ll'}^{uu'}$ and $\check{\varepsilon}_{ll'}^{uu'}$ are derived in Appendix B in supplementary file. It can be seen from the calculation that the orders of magnitude of the elements in $\mathbf{\check{\Psi}}^{uu'}$ are about $10^{-8} \sim 10^{-6}$ when $M=256$, $N_{\rm r}=16$ under the channel models and parameter setting in Section \ref{sim}.Therefore, the equalization error can be ignored, which implies that the interference caused by the frequency-selective channels are almost eliminated. In this case, the main factor impacting the SINR performance comes from the noise. By increasing the transmit SNR, $\text{SINR}_{m,n}^u$ can approach the theoretical SIR upper bound, i.e., the SIR in the case of SISO when the channel response is unit impulse, which is expressed as
\begin{align}\label{UB}
\text{SIR}_{m,n} = \frac{\bar{P}_{mm,nn}} {\mathop{\sum_{n'=-\infty}^{+\infty} \sum_{m'=0}^{M-1}}\limits_{(m',n')\neq(m,n)} \bar{P}_{mm',nn'}},
\end{align}
where $\bar{P}_{mm',nn'} \triangleq \left(\Re\left\{ \mathcal{F}_{mm',nn'}\left[\frac{\alpha M}{2}\right] \right\}\right)^2 P_{\rm s}$. $\bar{P}_{{\rm z},m,n}^u$ in (\ref{SINR}) is the average of $P_{{\rm z},m,n}^u$ over different channel realizations, where $P_{{\rm z},m,n}^u$ denotes the power of the noise $\hat{z}_{m,n}^u$ contained in $\hat{s}_{m,n}^u$. The expression of $\bar{P}_{{\rm z},m,n}^u$ is derived in Appendix C in supplementary file. To make the conclusion more reliable, we will confirm the theoretical SINR in (\ref{SINR}) through simulations in the subsequent section.

\subsection{Complexity Analysis}\label{CA}
In this subsection, we take the equalizers designed under ZF criterion as an example to analyze the complexity of calculating the equalizer coefficients of the following equalization schemes:

1) \textbf{Single-tap}: The scheme using the single-tap equalizers;

2) \textbf{Single-tap\&PDP}: The scheme using the single-tap equalizers assisted by the low-rate PDP equalizers designed by \cite[Eq. (21)]{AFFB18};

3) \textbf{Multi-tap}: The scheme using the low-rate multi-tap equalizers designed by \cite[Eq. (18)]{TAJM11};

4) \textbf{Two-stage}: The scheme using the low-rate multi-tap equalizers designed by the proposed two-stage framework.

We assume that instantaneous CSI in frequency domain is known at the BS, and high-rate equalizer length is $L_{\rm g}=M$. The conclusions are also applicable to the case of MMSE criterion. For simplicity, we evaluate the complexity in terms of the number of complex multiplications.

For \textbf{Single-tap} scheme, the main computational cost comes from matrix multiplication and matrix inversion, which are $\mathcal{O}(MN_{\rm t}^2N_{\rm r})$ and $\mathcal{O}(MN_{\rm t}^3)$, respectively. Generally, since $N_{\rm r} \geq N_{\rm t}$, the computational complexity is $\mathcal{O}(MN_{\rm t}^2N_{\rm r})$. For \textbf{Single-tap\&PDP} scheme, we assume that high-rate PDP equalizers are designed by FS method. The computational complexity contains two parts. One is designing high-rate PDP equalizers, and the complexity mainly comes from fast Fourier transform (FFT) and inverse FFT (IFFT), which is $\mathcal{O}(M^2N_{\rm t}\log_2M)$. The other is transforming the high-rate into low-rate PDP equalizers, and the complexity mainly comes from the convolution with band-pass filters, which is $\mathcal{O}(L'_{\rm g}M^2N_{\rm t})$. Therefore, the total computational complexity is $\mathcal{O}(MN_{\rm t}(N_{\rm t}N_{\rm r} + M\log_2M + L'_{\rm g}M))$. For \textbf{Multi-tap} scheme, the main computational cost comes from matrix multiplication and matrix inversion, and the total computational complexity is $\mathcal{O}(L'_{\rm g}MN_{\rm t}N_{\rm r}(N_{\rm t} + L'_{\rm g}))$.

For \textbf{Two-stage} scheme, in Stage-1, the main computational cost comes from matrix multiplication, matrix inversion, and IFFT. The complexity of these computations are $\mathcal{O}(MN_{\rm t}^2N_{\rm r})$, $\mathcal{O}(MN_{\rm t}^3)$, and $\mathcal{O}(MN_{\rm t}N_{\rm r}\log_2 M)$, respectively. In Stage-2, we first need to compute $\boldsymbol{\hat{e}}_m^{r,u}$, whose complexity is $\mathcal{O}( \frac{M L_{\rm f}}{D_1})$. The complexity of the multiplication between $\mathbf{\bar{F}}_m$ and $\boldsymbol{\hat{e}}_m^{r,u}$ is $\mathcal{O}(L'_{\rm g}( \frac{L_{\rm f}}{D_1} +L'_{\rm g}))$. Thus, the total computational complexity is $\mathcal{O}(\frac{MN_{\rm t}N_{\rm r}(ML_{\rm f}+L'_{\rm g}(L_{\rm f}+D_1L'_{\rm g}))}{D_1} + MN_{\rm t}^2N_{\rm r})$. Table \ref{tab:compare} summarizes the above complexity analysis and compares the schemes in terms of the required kinds of CSI and equalizers. It can be seen that the complexity of the proposed scheme is a polynomial of degree one respect to the number of BS antennas, which is affordable in the case of massive MIMO. Besides, \textbf{Sinle-tap\&PDP} scheme requires two kinds of CSI and equalizers, which complicates the channel estimation.

\newcommand{\tabincell}[2]{\begin{tabular}{@{}#1@{}}#2\end{tabular}}
\begin{table*}
\caption{Complexity of calculating the equalizer coefficients and required kinds of CSI and equalizers}
\renewcommand{\arraystretch}{1}
\begin{center}
{\begin{tabular}{|c|c|c|c|c|}\hline\label{tab:compare}
\diagbox[width=13em]{Indexes}{Schemes} & \textbf{Single-tap} & \textbf{Single-tap\&PDP} & \textbf{Multi-tap} & \textbf{Two-stage} \\
\hline
Complexity in $\mathcal{O}(\cdot)$ notation & $MN_{\rm t}^2N_{\rm r}$ & \tabincell{c}{$MN_{\rm t}(N_{\rm t}N_{\rm r} + M\log_2M$ \\ $ + L'_{\rm g}M)$} & $L'_{\rm g}MN_{\rm t}N_{\rm r}(N_{\rm t} + L'_{\rm g})$ & \tabincell{c}{$\frac{MN_{\rm t}N_{\rm r}(ML_{\rm f}+L'_{\rm g}(L_{\rm f}+D_1L'_{\rm g}))}{D_1}$ \\ $ + MN_{\rm t}^2N_{\rm r}$} \\
\hline
Required kinds of CSI & \tabincell{c}{One kind:\\ Instantaneous CSI} & \tabincell{c}{Two kinds:\\ Instantaneous and\\ statistical CSI} & \tabincell{c}{One kind:\\ Instantaneous CSI} & \tabincell{c}{One kind:\\ Instantaneous CSI} \\
\hline
Required kinds of equalizers & \tabincell{c}
{One kind:\\ Single-tap equalizers} & \tabincell{c}{Two kinds:\\ Single-tap equalizers and\\ multi-tap equalizers} & \tabincell{c}{One kind:\\ Multi-tap equalizers} & \tabincell{c}{One kind:\\ Multi-tap equalizers} \\
\hline
\end{tabular}}
\end{center}
\end{table*}

\section{Simulation Results}\label{sim}
\begin{table}[h]
\caption{Parameters setup for simulation}
\renewcommand{\arraystretch}{1}
\begin{center}
{\begin{tabular}{|c|c|c|}\hline\label{tab:all}
\textbf{Categories} & \textbf{Parameters} & \textbf{Values} \\\hline
\multirow{4}*{General setup} & Number of users $N_{\rm t}$ & 8 \\
\cline{2-3}
&Number of subcarriers $M$ & 256 \\
\cline{2-3}
&Subcarrier spacing & 30kHz\\
\cline{2-3}
&Bandwidth & 7.68MHz\\
\hline

\multirow{8}*{Channel} & \multirow{4}*{Channel model} & User-$1$, $2$: 3GPP-EVA \\
\cline{3-3}
& & User-$3$, $4$: 3GPP-ETU \\
\cline{3-3}
& & User-$5$, $6$: ITU-Ped.A \\
\cline{3-3}
& & User-$7$, $8$: ITU-Ped.B \\
\cline{2-3}

& \multirow{4}*{Channel length $L_{\rm h}$} & 3GPP-EVA\cite{3GPP}: 28 \\
\cline{3-3}
& & 3GPP-ETU\cite{3GPP}: 47 \\
\cline{3-3}
& & ITU-Ped.A\cite{ITU}: 12 \\
\cline{3-3}
& & ITU-Ped.B\cite{ITU}: 37 \\
\hline

\multirow{2}*{Prototype filter} & Type & PHYDYAS\cite{MB10} \\
\cline{2-3}
& Overlapping factor $\kappa$ & 4 \\
\hline

\multirow{2}*{Equalizer} & High-rate equalizer length $L_{\rm g}$ & $M$ \\
\cline{2-3}
& Delay term $\alpha$ & 1 \\
\hline


\end{tabular}}{}
\end{center}
\end{table}

In this section, we provide simulation results to evaluate the performance of the proposed equalization framework. For the FBMC/OQAM-based massive MIMO system in all simulations, the simulation parameters are summarised in Table \ref{tab:all} unless stated otherwise. Assume that all users share the same transmit SNR, which is denoted by ${\gamma}=P_{\rm s}/\sigma_{\rm z}^2$. In the sequence, we compare SIR, SINR and average mean squared error (MSE) performance (each performance are averaged over different channel realizations) of the equalization schemes illustrated in Section \ref{CA}, where the equalizers are designed under ZF or MMSE criterion.

\begin{figure}
\centering
\includegraphics[width=3.5in]{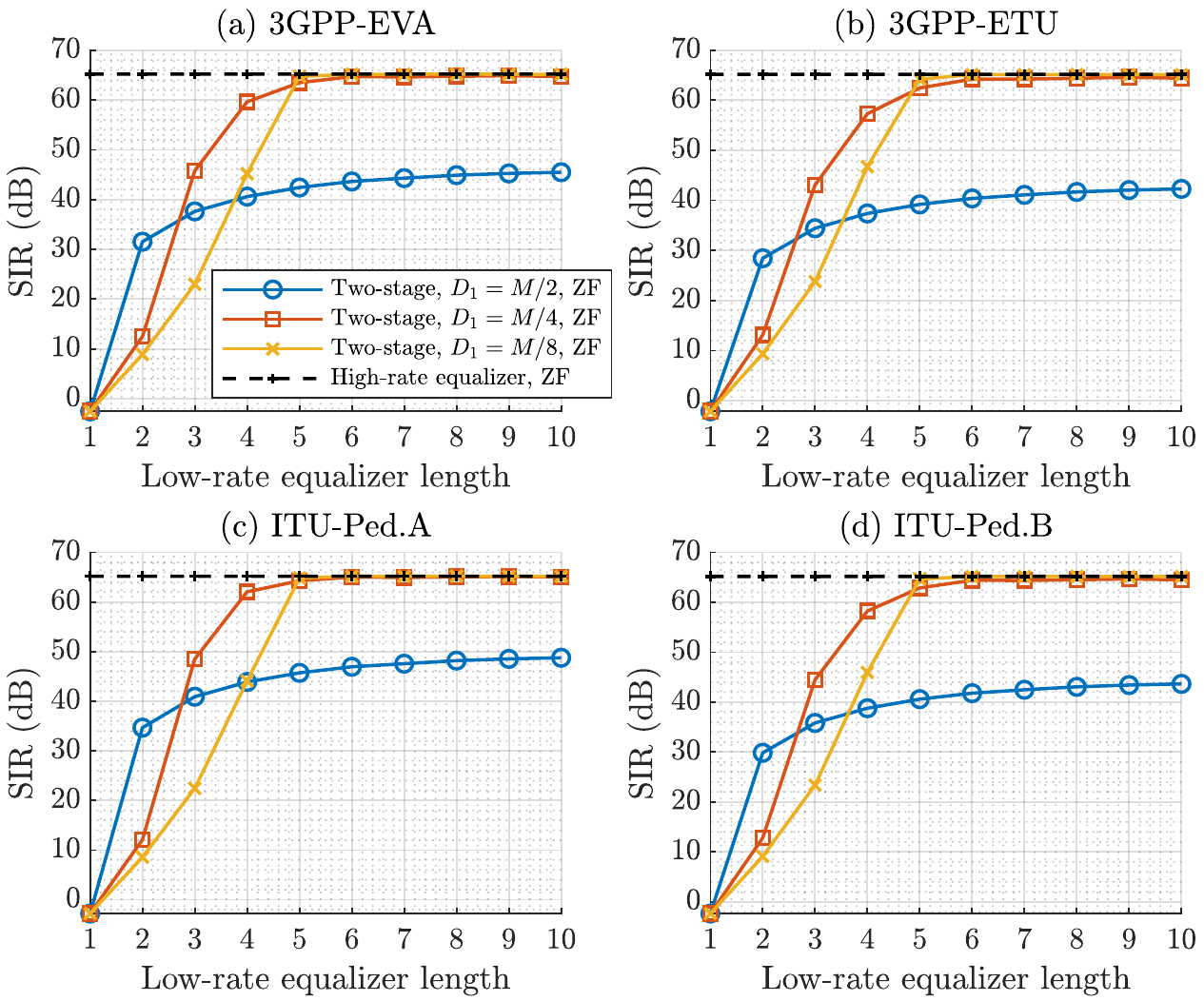}
\centering
\caption{SIR performance at a certain subcarrier versus low-rate equalizer length under different channel models. Three sets of $D_1$ are selected, where $D_1=\frac{M}{2}$ implies that the receiver works under the conventional one-step decimation. The performance of the high-rate equalizers designed in Stage-1 is added as a benchmark. All the equalizers are designed under ZF criterion. $N_{\rm r}=16$, $\gamma=10$ dB.}
\label{SIRvsLeq_N=16_SNR=10}
\end{figure}

\begin{figure}
\centering
\includegraphics[width=3.5in]{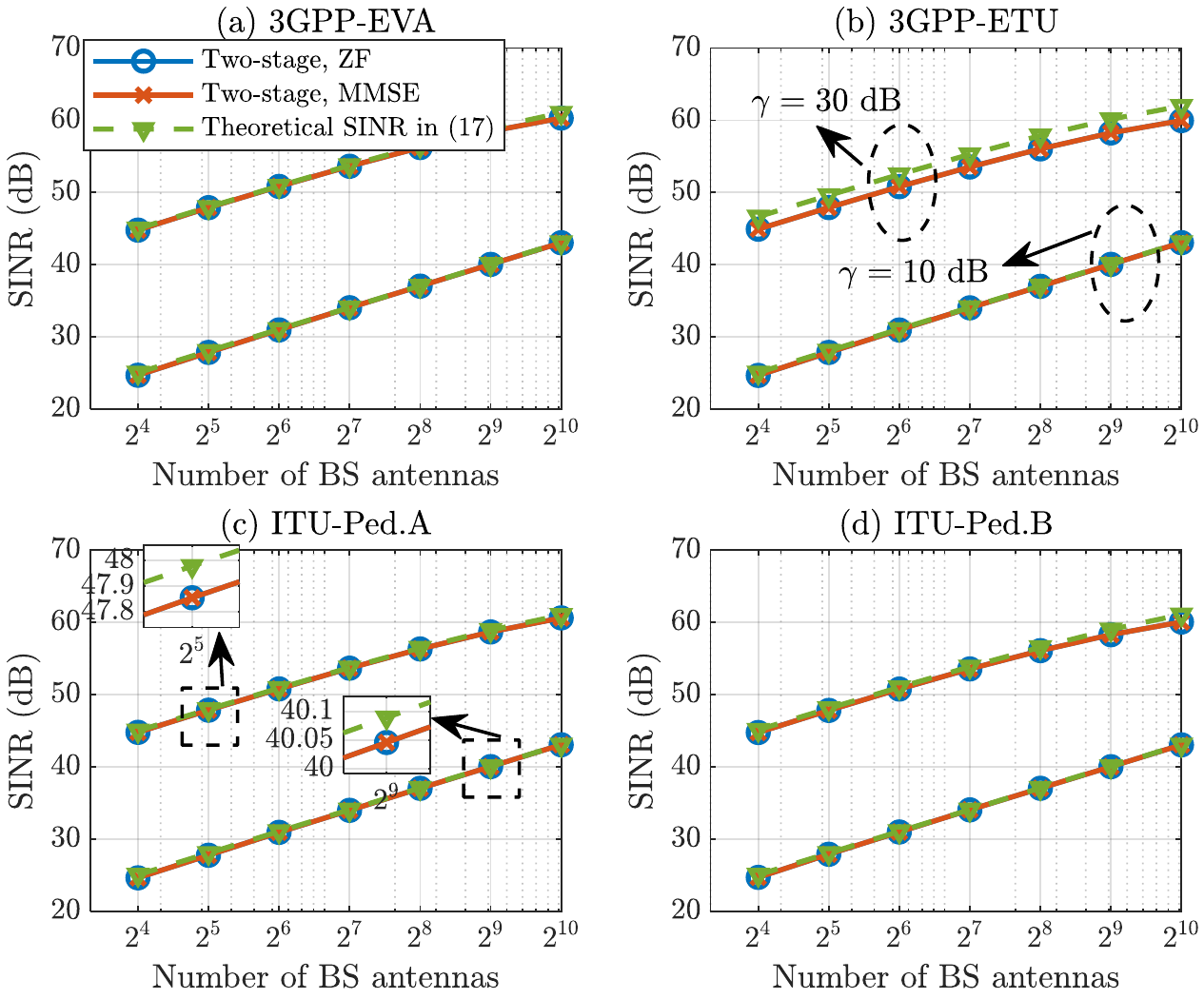}
\centering
\caption{SINR performance at a certain subcarrier versus the number of BS antennas under different channel models. Both ZF and MMSE criteria are considered. The theoretical SINR in (\ref{SINR}) is added as a benchmark. $D_1=\frac{M}{4}$, $L'_{\rm g}=5$, $\gamma=10$ or $30$ dB.}
\label{SINRvsN_theo}
\end{figure}

\subsection{Performance Under MISO Case}
In this subsection, we consider the single-user case, where different channel models are also taken into account. First, we use SIR performance to evaluate the performance of the low-rate equalizers using our proposed transform approach, and the impact of noise is excluded. Fig. \ref{SIRvsLeq_N=16_SNR=10} shows the SIR performance at a certain subcarrier versus the equalizer length $L'_{\rm g}$. It can be seen that the SIR increases along with the equalizer length and tends to a saturation level when $L'_{\rm g} \geq 5$. However, in the case of $D_1=\frac{M}{4}$ and $D_1=\frac{M}{8}$, the saturation level approaches the SIR performance of the corresponding high-rate equalizers. This indicates that by using our proposed two-step decimation combined with LS method in Stage-2, the low-rate equalizers have almost the same performance with the high-rate equalizers designed in Stage-1. Besides, although the emissions of $f_m^*[-l]$ outside the range $[\frac{2\pi(m-1)}{M},\frac{2\pi(m+1)}{M})$ are tiny enough, they still have a great impact on the accuracy of the low-rate equalizers. By using the two-step decimation, the problem can be effectively solved.

Then, we verify the validity of the analyzed theoretical SINR performance in (\ref{SINR}). Fig. \ref{SINRvsN_theo} shows the SINR performance at a certain subcarrier versus the number of BS antennas $N_{\rm r}$, where both ZF and MMSE criteria are considered. This figure confirms that the theoretical SINR values match the simulated ones regardless of the channel model, which verifies the derivation in \ref{SP}. Besides, the equalizers designed under ZF and MMSE have almost the same performance. Thus, in the following subsections, equalizers are all based on ZF criterion.

\subsection{SINR Performance Under MIMO Case}
\begin{figure}
\centering
\includegraphics[width=3.5in]{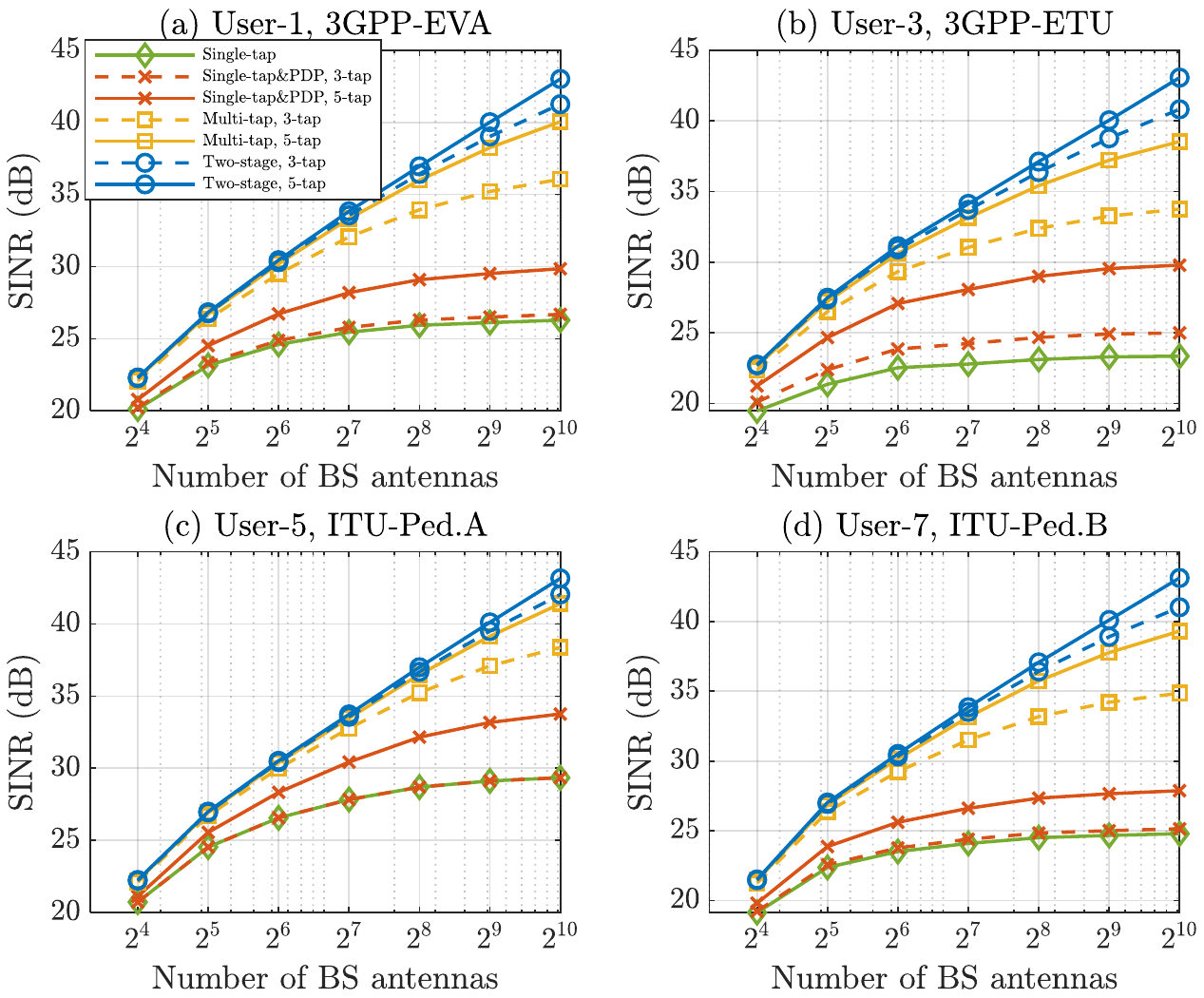}
\centering
\caption{SINR performance of various equalization schemes at a certain subcarrier versus the number of BS antennas. Two sets of $L'_{\rm g}$ are selected. For \textbf{Two-stage} scheme, $D_1=\frac{M}{4}$. $\gamma=10$ dB.}
\label{SINRvsN_SNR=10}
\end{figure}

In this subsection, the multi-user case is considered, where the channel model corresponding to each user is shown in Table \ref{tab:all}. We evaluate SINR performance of various equalization schemes at a certain subcarrier under different channel models. Fig. \ref{SINRvsN_SNR=10} shows SINR performance as a function of different numbers of BS antennas, where the transmit SNR is $\gamma=10$ dB. It can be observed that: 1) For the single-tap equalization, as $N_{\rm r}$ increases, SINR performance tends to be limited. This validates the claim in \cite{AFFB18} that by using single-tap equalizers, some residual interference from the same user remains even with infinite number of BS antennas. By using the proposed scheme, this problem can be effectively solved, and the SINR performance continues to increase with the the number of BS antennas; 2) For a certain number of BS antennas, the proposed scheme outperforms other schemes, which is valid even when $L'_{\rm g}=3$ for the proposed scheme while $L'_{\rm g}=5$ for other schemes; 3) For \textbf{Single-tap\&PDP} scheme, although $N_{\rm r}$ may be large enough (e.g. $N_{\rm r}=1024$) to characterize the residual interference as channel PDPs, the low-rate PDP equalizers perform not well for three reasons due to \textbf{R1}-\textbf{R3} illustrated in Section \ref{SE}; 4) There is no obvious impact on \textbf{Two-stage} scheme under different channel models.

Fig. \ref{SINRvsSNR_N=16} and Fig. \ref{SINRvsSNR_N=64} demonstrates the SINR performance as a function of different transmit SNRs, where the number of BS antennas is $N_{\rm r}=16$ and $N_{\rm r}=64$, respectively. We can see that: 1) When $\gamma> 40$dB, the SINR performance of the proposed scheme with $L'_{\rm g}=5$ can continue to grow until it approaches the SIR upper bound in (\ref{UB}), while other performance curves tend to be limited under much lower values. It illustrates that in this case, the main factor impacting the SINR performance is not the interference brought by the channels but the noise. Through increasing the transmit SNR, even with finite number of BS antennas (e.g., $N_{\rm r}=64$), the SINR performance can also approach the theoretical SIR upper bound, which is the property that other schemes do not have; 2) For \textbf{Multi-tap} scheme, due to the loss of channel information caused by FBMC demodulation, its performance is degraded obviously. Fortunately, this can be effectively avoided by Stage-1 of \textbf{Two-stage} scheme.

\begin{figure}
\centering
\includegraphics[width=3.5in]{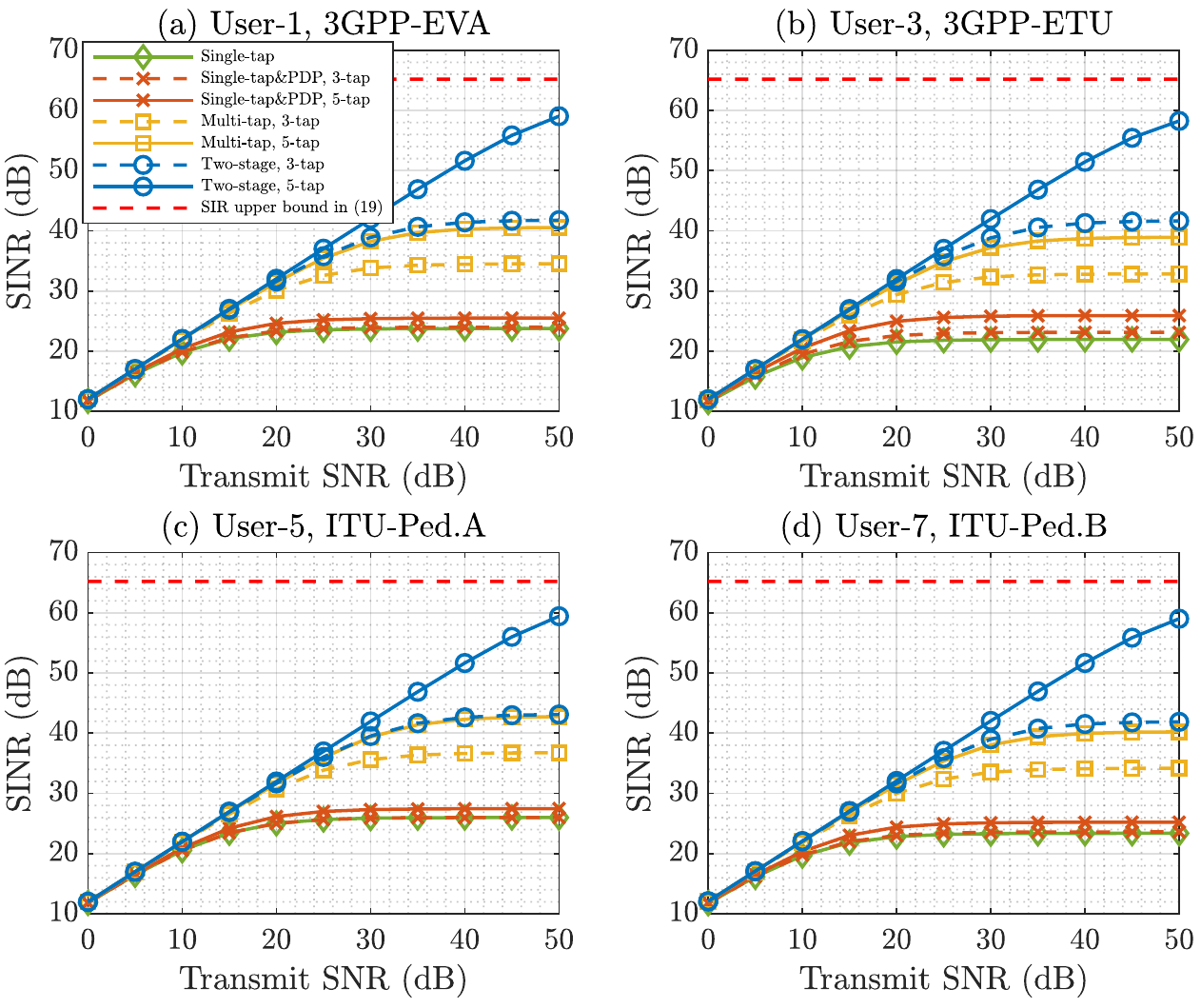}
\centering
\caption{SINR performance of various equalization schemes at a certain subcarrier versus transmit SNR under different channel models. Two sets of $L'_{\rm g}$ are selected. The SIR upper bound in (\ref{UB}) is added as a benchmark. For \textbf{Two-stage} scheme, $D_1=\frac{M}{4}$. $N_{\rm r}=16$.}
\label{SINRvsSNR_N=16}
\end{figure}

\begin{figure}
\centering
\includegraphics[width=3.5in]{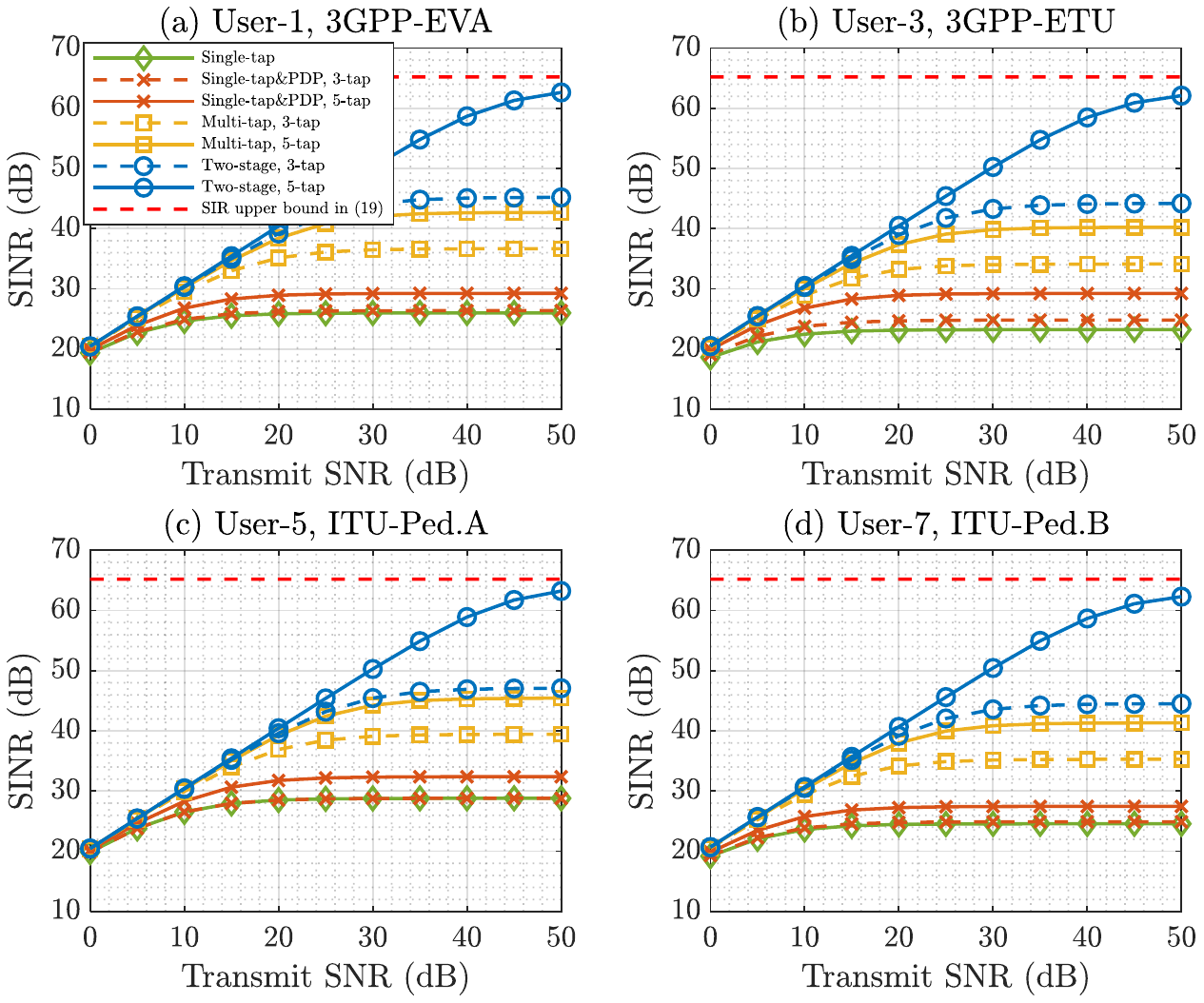}
\centering
\caption{SINR performance of various equalization schemes at a certain subcarrier versus transmit SNR. Two sets of $L'_{\rm g}$ are selected. The SIR upper bound in (\ref{UB}) is added as a benchmark. For \textbf{Two-stage} scheme, $D_1=\frac{M}{4}$. $N_{\rm r}=64$.}
\label{SINRvsSNR_N=64}
\end{figure}

\subsection{Average MSE Performance Under MIMO Case}
In this subsection, the multi-user case is considered, where the channel model corresponding to each user is shown in Table \ref{tab:all}. We evaluate the performance of various equalization schemes in terms of average MSE between transmitted and estimated OQAM symbols, which is defined as
\begin{align}
\text{MSE} \triangleq \mathbb{E}\left\{ \frac{\sum_{n=0}^{N_{\rm d}-1} \sum_{m=0}^{M-1} \Vert \boldsymbol{\hat{s}}_{m,n} - \boldsymbol{s}_{m,n} \Vert^2}{N_{\rm t}MN_{\rm d}} \right\},
\end{align}
where $N_{\rm d}$ is the number of OQAM symbols per subcarrier. Moreover, we also provide the performance in presence of imperfect CSI brought by the channel estimation which is training-based linear MMSE channel estimation method \cite{PGAK20}. The estimate of the channel matrix at the $m$-th subcarrier in frequency domain between the BS and the users is
\begin{align}
\mathbf{\hat{\tilde{H}}}_m = \frac{P_{\rm p} \mathbf{\tilde{H}}_m + \mathbf{\tilde{Z}}_m}{P_{\rm p} + \sigma_{\rm z}^2},
\end{align}
where $\mathbf{\tilde{H}}_m \triangleq \sum_{\ell=0}^{L_{\rm h}-1}{\mathbf{H}\left[ \ell\right] e^{-j\frac{2\pi m\ell}{M}}}$. $\mathbf{\tilde{Z}}_m$ can be regarded as the estimated error matrix whose elements are independent random variables and identically distributed as $\mathcal{CN}(0, P_{\rm p}\sigma_{\rm z}^2)$. $P_{\rm p}\triangleq 2P_sL_{\rm p}$ represents the pilot power, and $L_{\rm p}$ is the number of training symbols per subcarrier. The related parameter setting are shown in Table \ref{tab:mse}.

\begin{table}[h]
\caption{Parameters setup for MSE performance simulation}
\renewcommand{\arraystretch}{1}
\begin{center}
{\begin{tabular}{|c|c|}\hline\label{tab:mse}
\textbf{Parameters} & \textbf{Values} \\\hline
Number of OQAM symbols per subcarrier $N_{\rm d}$ & 96 \\
\cline{1-2}
Number of training symbols per subcarrier $L_{\rm p}$ & 8 \\
\cline{1-2}
Modulation format & 16-QAM\\
\hline

\end{tabular}}{}
\end{center}
\end{table}

Fig. \ref{MSEvsSNR_16QAM_N=16} and \ref{MSEvsSNR_16QAM_N=64} show the MSE performance as a function of different transmit SNRs, where the number of BS antennas is $N_{\rm r}=16$ and $N_{\rm r}=64$, respectively. We also evaluate the performance of CP-OFDM-based massive MIMO system as a benchmark. Simulations under perfect and imperfect CSI are labelled as `P-CSI' and `I-CSI', respectively. For the FBMC-based systems, as the figures show: 1) Compered with other equalization schemes, the proposed two-stage scheme has better performance in terms of high transmit SNR. Furthermore, we nearly achieve the same performance with CP-OFDM, where the channel frequency response is completely flat over each individual subcarrier; 2) When the number of BS antennas increases from $16$ to $64$, the performance of the two-stage scheme increases significantly, while that of other schemes has only minor improvements; 3) The imperfect CSI brought by the channel estimation has an impact on all equalization schemes in terms of low transmit SNR, which is about $1.8$ dB. However, when the MSE performance converges along with the increase of transmit SNR, the impact of the imperfect CSI will gradually disappear. The reason is that all the schemes are based on linear equalization. When the transmit SNR is large enough, noise is no longer the main impact factor, and the equalization error caused by the imperfect CSI also tends to zero. Therefore, these schemes have similar robustness in presence of imperfect CSI.

\begin{figure}
\centering
\includegraphics[width=3.5in]{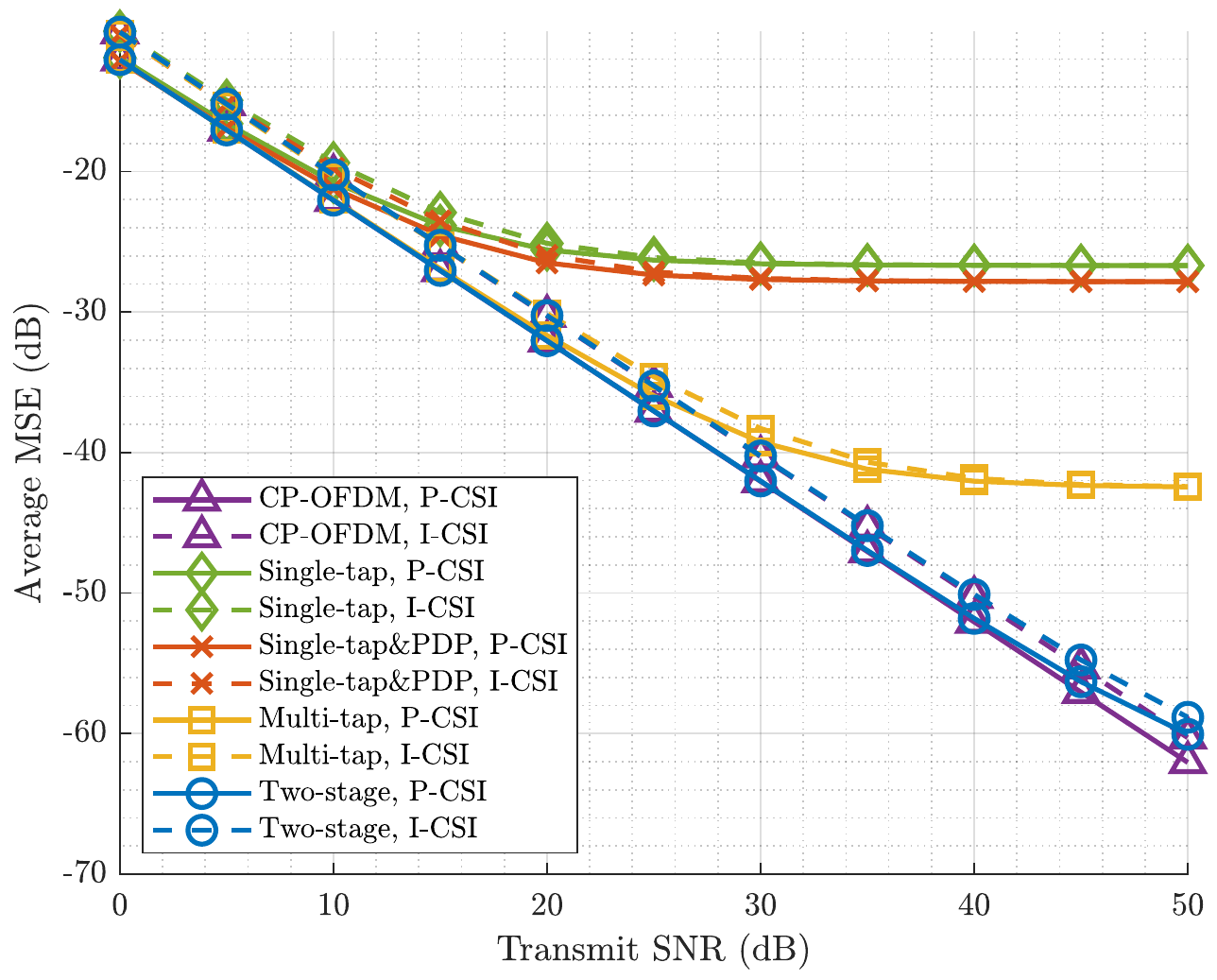}
\centering
\caption{MSE performance of different equalization schemes versus transmit SNR. Simulations under perfect and imperfect CSI are labelled as `P-CSI' and `I-CSI', respectively. For \textbf{Two-stage} scheme, $D_1=\frac{M}{4}$. $L'_{\rm g}=5$, $N_{\rm r}=16$.}
\label{MSEvsSNR_16QAM_N=16}
\end{figure}

\begin{figure}
\centering
\includegraphics[width=3.5in]{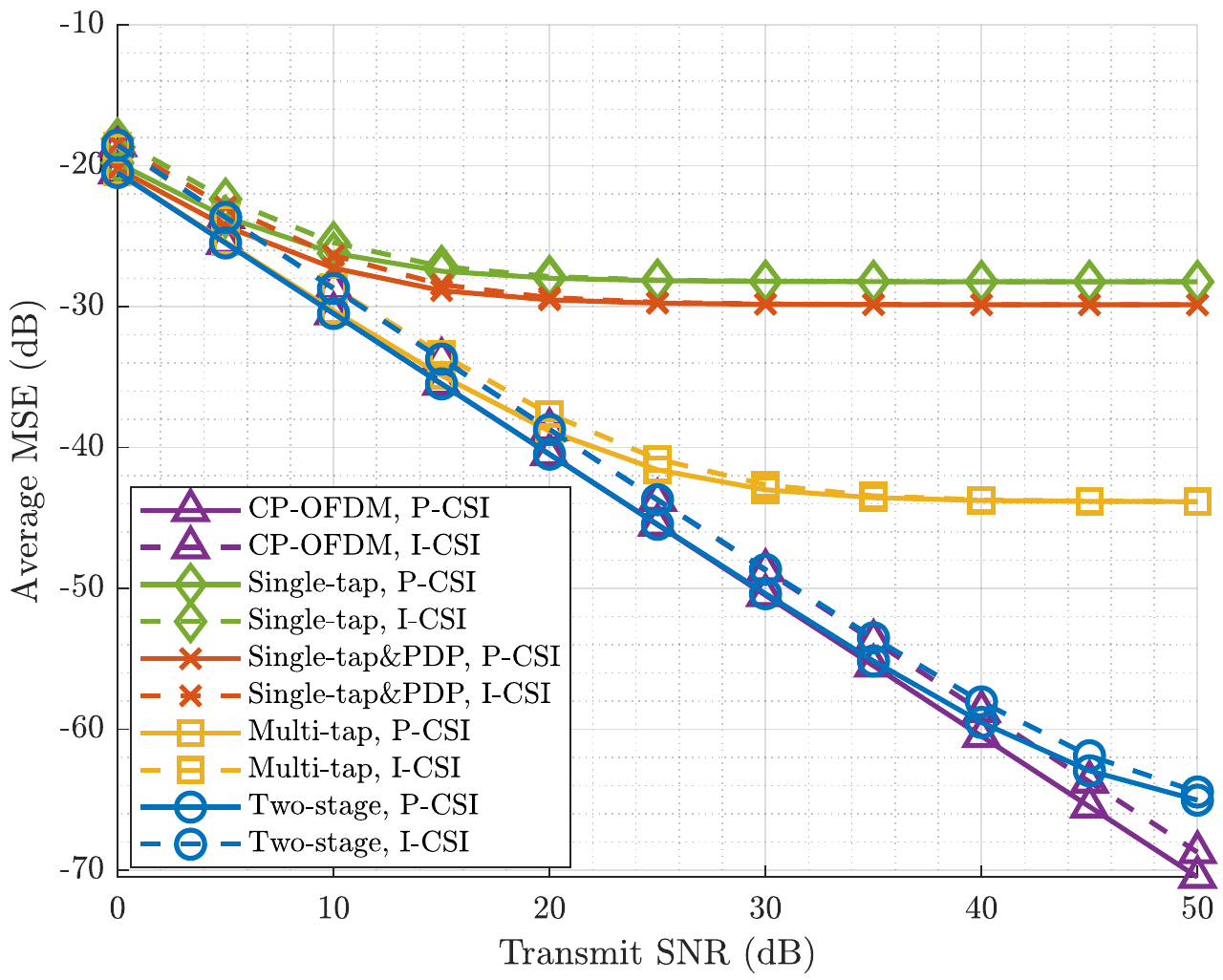}
\centering
\caption{MSE performance of different equalization schemes versus transmit SNR. Simulations under perfect and imperfect CSI are labelled as `P-CSI' and `I-CSI', respectively. For \textbf{Two-stage} scheme, $D_1=\frac{M}{4}$. $L'_{\rm g}=5$, $N_{\rm r}=64$.}
\label{MSEvsSNR_16QAM_N=64}
\end{figure}

\section{Conclusion}
In this paper, we propose a two-stage design scheme of equalizers for uplink FBMC/OQAM-based massive MIMO system in highly frequency-selective channels, which overcomes the shortcomings of the prior equalization schemes. Besides, we derive the theoretical expression for SINR performance of the proposed scheme, which shows that the interference caused by the channels can be almost eliminated even with finite number of BS antennas. In this case, by increasing the transmit SNR, the SINR value can approach the SIR upper bound. Simulation results validate the above analyses and show that our scheme outperforms some prior works. Notice that this scheme can also be applied to other forms of FBMC and new waveforms based massive MIMO systems. In the future works, we will consider the downlink scenario and the multi-cell system.
\appendices

\section{Proof of Proposition \ref{(a)=(c)}}\label{Proof(a)=(c)}
This proof is divided into two parts. First, we prove that the receivers shown in Fig. \ref{Method1&2Model}-(a) and Fig. \ref{Method1&2Model}-(c) are equivalent, where $\bar{g}_m^{r,u}[n]$ is obtained by \textbf{Method 1}. Then, we prove that through \textbf{Method 1} and \textbf{Method 2}, the same $\bar{g}_m^{r,u}[n]$ can be obtained. For a more general conclusion, here, the coefficient $\frac{M}{2}$ is replaced by $D$.

\emph{Part 1}: The DTFT of $\check{g}_{{\rm (1)},m}^{r,u}[l]$ can be expressed as
\begin{align}
\tilde{\check{g}}_{{\rm (1)},m}^{r,u}(\omega)=
\left\{ \begin{matrix}
\tilde{g}^{r,u}(\omega), &\omega \in [\frac{\pi(m-1)}{D},\frac{\pi(m+1)}{D}) \\
0, &\text{else}.
\end{matrix}
\right.
\end{align}
Since $\bar{g}_m^{r,u}[n] = D\left(\check{g}_{{\rm (1)},m}^{r,u}[l]\right)_{\downarrow D}$, the frequency response of $\bar{g}_m^{r,u}[n]$ is \cite{MJV14}
\begin{align}\label{tildegmru1}
\tilde{\bar{g}}_m^{r,u}(\omega) = \sum_{i=0}^{\frac{M}{2}-1} \tilde{\check{g}}_{{\rm (1)},m}^{r,u}\left( \frac{\omega-2\pi i}{D} \right).
\end{align}
Therefore, the DTFT of the output signal $\hat{y}_m^{r,u}[n]$ in Fig. \ref{Method1&2Model}-(c) can be expressed as
\begin{align}\label{tildeymru}
\nonumber
\tilde{\hat{y}}_{{\rm (c)},m}^{r,u}(\omega)=
&\frac{1}{D} \sum_{i=0}^{D-1} \sum_{i'=0}^{D-1} \tilde{y}^r\left(\frac{\omega-2\pi i}{D}\right) \tilde{f}_m^*\left(\frac{\omega-2\pi i}{D}\right)\\
&\times \tilde{\check{g}}_{{\rm (1)},m}^{r,u}\left(\frac{\omega-2\pi i'}{D}\right),
\end{align}
where $\tilde{f}_m^*(\omega)$ is the DTFT of the analysis filter $f_m^*[-l]$. Due to the fact that both $\tilde{f}_m^*(\omega)$ and $\tilde{\check{g}}_{{\rm (1)},m}^{r,u}(\omega)$ are band-limited to $[\frac{\pi(m-1)}{D},\frac{\pi(m+1)}{D})$, it can be found that $\tilde{f}_m^*\left(\frac{\omega-2\pi i}{D}\right)$ and $\tilde{\check{g}}_{{\rm (1)},m}^{r,u}\left(\frac{\omega-2\pi i'}{D}\right)$ are band-limited to $[(2i+m-1)\pi,(2i+m+1)\pi)$ and $[(2i'+m-1)\pi,(2i'+m+1)\pi)$, respectively. When $i\neq i'$, there is no intersection between those two ranges. Hence,
\begin{align}\label{ineqi}
\tilde{f}_m^*\left(\frac{\omega-2\pi i}{D}\right) \tilde{\check{g}}_{{\rm (1)},m}^{r,u}\left(\frac{\omega-2\pi i'}{D}\right) = 0, \quad i\neq i'.
\end{align}
Moreover, note that $\tilde{f}_m^*(\omega) \tilde{\check{g}}_{{\rm (1)},m}^{r,u}(\omega) = \tilde{f}_m^*(\omega) \tilde{g}^{r,u}(\omega)$. (\ref{tildeymru}) can be simplified as
\begin{align}
\nonumber
\tilde{\hat{y}}_{{\rm (c)},m}^{r,u}(\omega)=
&\frac{1}{D} \sum_{i=0}^{D-1} \tilde{y}^r\left(\frac{\omega-2\pi i}{D}\right) \tilde{f}_m^*\left(\frac{\omega-2\pi i}{D}\right)\\
&\times \tilde{g}^{r,u}\left(\frac{\omega-2\pi i}{D}\right),
\end{align}
which is equivalent to the DTFT of the output signal shown in Fig. \ref{Method1&2Model}-(a). Thus, the first part of the proof is completed.

\emph{Part 2}: Note that $\tilde{\check{g}}_{{\rm (2)},m}^{r,u}(\omega)$ can be written as $\tilde{\check{g}}_{{\rm (2)},m}^{r,u}(\omega) = \sum_{\ell=0}^{D-1} \tilde{\check{g}}_{{\rm (1)},m}^{r,u}\left(\omega-\frac{2\pi\ell}{D}\right)$. Thus, the DTFT of $\bar{g}_m^{r,u}[n]$ obtained by \textbf{Method 2} can be expressed as
\begin{align}\label{tildegmru2}
\tilde{\bar{g}}_m^{r,u}(\omega)
= \frac{1}{D} \sum_{i=0}^{D-1} \sum_{\ell=0}^{D-1} \tilde{\check{g}}_{{\rm (1)},m}^{r,u}\left(\frac{\omega-2\pi i-2\pi\ell}{D}\right).
\end{align}
Due to the periodicity of DTFT, for $\forall i\in \{0,\cdots,D-1\}$, it follows
\begin{align}
\nonumber
&\sum_{\ell=0}^{D-1} \tilde{\check{g}}_{{\rm (1)},m}^{r,u}\left(\frac{\omega-2\pi i-2\pi\ell}{D}\right)\\
=&\sum_{\ell=0}^{D-1} \tilde{\check{g}}_{{\rm (1)},m}^{r,u}\left(\frac{\omega-2\pi (i+\ell-Dv_{\ell})}{D}\right),
\end{align}
where $v_{\ell}, \ell=0,\cdots,D-1$ are arbitrary integers. Obviously, when $\ell$ evolves from $0$ to $D-1$, there always exists a corresponding $v_{\ell}$, such that $i'=i+\ell-Dv_{\ell}$ can also evolve from $0$ to $D-1$. In this way, (\ref{tildegmru2}) can be transformed into $\tilde{\bar{g}}_m^{r,u}(\omega) = \sum_{i'=0}^{D-1} \tilde{\check{g}}_{{\rm (1)},m}^{r,u}\left(\frac{\omega-2\pi i'}{D}\right)$, which is the same as (\ref{tildegmru1}). Therefore, the second part of the proof is completed.

\section{Calculation of The Statistics of $\psi^{uu'}[l]$ }\label{psistatistic}
The equivalent channel $\mathbf{H}_{\rm eq}[l]$ can be calculated as
\begin{align}
\nonumber
\mathbf{H}_{\rm eq}[l]
=&\mathbf{G}[l] \star \mathbf{H}[l]\\
\nonumber
=&\frac{1}{M} \sum_{\ell=l-(M-1)}^{l} \sum_{m=0}^{M-1} \boldsymbol{\mathcal{W}}_m^{\rm H} \mathbf{H}[\ell] e^{j\frac{2\pi m(l-\ell-\alpha M/2)}{M}},
\end{align}
where $\boldsymbol{\mathcal{W}}_m = \mathbf{\tilde{H}}_m \left( \mathbf{\tilde{H}}_m^{\rm H} \mathbf{\tilde{H}}_m \right)^{-1}$. It can be divided into the following cases according to the value of $l$:

\emph{Case 1}: When $L_{\rm h}-1 \leq l \leq M-1$, $\mathbf{H}_{\rm eq}[l]$ can be simplified as
\begin{align}
\nonumber
\mathbf{H}_{\rm eq}[l]
=\frac{1}{M} \sum_{m=0}^{M-1} \boldsymbol{\mathcal{W}}_m^{\rm H} \mathbf{\tilde{H}}_m e^{j\frac{2\pi m(l-\alpha M/2)}{M}}
=\mathbf{\Delta}\left[l-\frac{\alpha M}{2}\right].
\end{align}
In this case, $\mathbf{\Psi}[l]=\mathbf{H}_{\rm eq}[l]-\mathbf{\Delta}\left[l-\frac{\alpha M}{2}\right]=\mathbf{0}$. It means that when $L_{\rm h}-1 \leq l \leq M-1$, there is no equalization error.

\emph{Case 2}: When $0\leq l < L_{\rm h}-1$, $\mathbf{H}_{\rm eq}[l]$ can not be simplified as that in \emph{Case 1}. Thus, we turn to calculating the statistics of $\psi^{uu'}[l]$. The mean of $\psi^{uu'}[l]$ can be calculated as
\begin{align}
\nonumber
\mathbb{E}\left\{\psi^{uu'}[l]\right\} =&\mathbb{E}\left\{ \left[ \mathbf{H}_{\rm eq}[l] - \mathbf{\Delta}\left[l-\frac{\alpha M}{2}\right] \right]_{u,u'} \right\}\\
\nonumber
=&\frac{1}{M^2} \sum_{\ell=0}^{l} \sum_{m=0}^{M-1} \sum_{m'=0}^{M-1} \mathbb{E} \left\{ \left( \boldsymbol{w}_m^u \right)^{\rm H} \boldsymbol{\tilde{h}}_{m'}^{u'} \right\} e^{j\frac{2\pi m'\ell}{M}}\\
\nonumber
&\times e^{j\frac{2\pi m(l-\ell-\alpha M/2)}{M}}\\
\nonumber
\overset{\text{(a)}}{=}&\frac{1}{M^2} \sum_{\ell=0}^{l} \sum_{m=0}^{M-1} \sum_{m'=0}^{M-1} \sum_{\ell'=0}^{L_{\rm h}-1} \delta_{uu'} q^{u'}[\ell'] e^{j\frac{2\pi (m-m')\ell'}{M}}\\
\nonumber
&\times e^{j\frac{2\pi m'\ell}{M}} e^{j\frac{2\pi m(l-\ell-\alpha M/2)}{M}}\\
\nonumber
=&\sum_{\ell=0}^{l} \delta_{uu'} q^{u'}[\ell] \delta\left[l-\frac{\alpha M}{2}\right]=0,
\end{align}
where $\boldsymbol{w}_m^u$ and $\boldsymbol{\tilde{h}}_{m}^{u}$ are the $u$-th column of $\boldsymbol{\mathcal{W}}_m$ and $\mathbf{\tilde{H}}_m$, respectively. (a) follows from the fact that $\boldsymbol{\tilde{h}}_{m'}^{u'}$ can be transformed into a combination of two terms, i.e., $\boldsymbol{\tilde{h}}_{m'}^{u'} = \tau_{mm'}^{u'} \boldsymbol{\tilde{h}}_{m}^{u'} + \boldsymbol{\tilde{h}}_{mm'}^{u',{\rm in}}$,
where $\boldsymbol{\tilde{h}}_{mm'}^{u',{\rm in}}$ is independent of $\boldsymbol{\tilde{h}}_{m}^{u'}$ and the correlation coefficient $\tau_{mm'}^{u'}$ is
\begin{align}
\nonumber
\tau_{mm'}^{u'} = \mathbb{E} \left\{ h_{m'}^{r,u'} \left( h_{m}^{r,u'} \right)^* \right\} = \sum_{l=0}^{L_{\rm h}-1} q^{u'}[l] e^{j\frac{2\pi (m-m')l}{M}},
\end{align}
which satisfies that $\left( \tau_{mm'}^{u'} \right)^* = \tau_{m'm}^{u'}$. Besides, it can be obtained that $\left(\boldsymbol{w}_m^u\right)^{\rm H} \boldsymbol{\tilde{h}}_m^{u'}=\delta_{uu'}$. The second order statistic of $\psi^{uu'}[l]$ can be calculated as
\begin{align}\label{varep}
\nonumber
&\mathrel{\phantom{=}}\varepsilon_{ll'}^{uu'}\\
\nonumber
&\triangleq\mathbb{E}\left\{ \psi^{uu'}[l] \left( \psi^{uu'}[l'] \right)^* \right\}\\
\nonumber
&=\frac{1}{M^2} \sum_{\ell =0}^l \sum_{m=0}^{M-1} \sum_{\ell '=0}^{l'} \sum_{m'=0}^{M-1} \mathbb{E}\left\{ \left( \boldsymbol{w}_{m}^{u} \right)^{\rm H} \boldsymbol{h}^{u'}[\ell] \left( \boldsymbol{h}^{u'}[\ell'] \right)^{\rm H} \boldsymbol{w}_{m'}^{u} \right\}\\
&\mathrel{\phantom{=}}\times e^{j\frac{2\pi m(l-\ell -\alpha M/2)}{M}} e^{-j\frac{2\pi m'(l'-\ell'-\alpha M/2)}{M}}.
\end{align}
When $u=u'$, $\mathbb{E}\left\{ \left( \boldsymbol{w}_{m}^{u} \right)^{\rm H} \boldsymbol{h}^{u}[\ell] \left( \boldsymbol{h}^{u}[\ell'] \right)^{\rm H} \boldsymbol{w}_{m'}^{u} \right\}$ can be calculated as
\begin{align}
\nonumber
&\mathrel{\phantom{=}}\mathbb{E}\left\{ \left( \boldsymbol{w}_{m}^{u} \right)^{\rm H} \boldsymbol{h}^{u}[\ell] \left( \boldsymbol{h}^{u}[\ell'] \right)^{\rm H} \boldsymbol{w}_{m'}^{u} \right\}\\
\nonumber
&=\frac{1}{M^2}\sum_{m''=0}^{M-1} \sum_{m'''=0}^{M-1} \mathbb{E}\left\{ \left( \boldsymbol{w}_{m}^{u} \right)^{\rm H}
\boldsymbol{\tilde{h}}_{mm''}^{u,\mathrm{in}} \left( \boldsymbol{\tilde{h}}_{m'm'''}^{u,\mathrm{in}} \right)^{\rm H} \boldsymbol{w}_{m'}^{u} \right\}\\
\nonumber
&\mathrel{\phantom{=}}\times e^{j\frac{2\pi m''\ell}{M}} e^{-j\frac{2\pi m'''\ell'}{M}}
+q^{u}[\ell]q^{u}[\ell'] e^{j\frac{2\pi (m\ell-m'\ell')}{M}}.
\end{align}
Since $\mathbb{E}\left\{ \left( \boldsymbol{w}_{m}^{u} \right)^{\rm H}
\boldsymbol{\tilde{h}}_{mm''}^{u,\mathrm{in}} \left( \boldsymbol{\tilde{h}}_{m'm'''}^{u,\mathrm{in}} \right)^{\rm H} \boldsymbol{w}_{m'}^{u} \right\}$ is hard to calculate, here, we consider the case of multiple-input single-output (MISO) to simplify the problem and provide an insight into its value. Thus, it can be further calculated as
\begin{align}
\nonumber
&\mathbb{E}\left\{ \left( \boldsymbol{w}_{m} \right)^{\rm H}
\boldsymbol{\tilde{h}}_{mm''}^{\mathrm{in}} \left( \boldsymbol{\tilde{h}}_{m'm'''}^{\mathrm{in}} \right)^{\rm H} \boldsymbol{w}_{m'} \right\}\\
\nonumber
\approx&\frac{ \tau_{m'''m''} - \tau_{m'''m'} \tau_{m'm''} - \tau_{m'''m} \tau_{mm''} + \tau_{m'''m'} \tau_{m'm} \tau_{mm''}} {\tau_{m'm} N_{\mathrm{r}}},
\end{align}
where the superscript $u$ is omitted. We denote $O_{mm'}$ as (\ref{Omm'}) shown at the top of the next page.
\begin{figure*}[!t]
\normalsize
\begin{align}\label{Omm'}
\nonumber
O_{mm'}\triangleq
&\frac{1}{M^2} \sum_{m''=0}^{M-1} \sum_{m'''=0}^{M-1} \frac{ \tau_{m'''m''} - \tau_{m'''m'} \tau_{m'm''} - \tau_{m'''m} \tau_{mm''} + \tau_{m'''m'} \tau_{m'm} \tau_{mm''}} {\tau_{m'm}} e^{j\frac{2\pi m''\ell}{M}} e^{-j\frac{2\pi m'''\ell'}{M}}\\
=&\frac{q[\ell]\delta[\ell-\ell'] - q[\ell]q[\ell'] \left( e^{j\frac{2\pi m(\ell-\ell')}{M}} + e^{j\frac{2\pi m'(\ell-\ell')}{M}} \right)}{\tau_{m'm}} + q[\ell]q[\ell'] e^{j\frac{2\pi (m\ell-m'\ell')}{M}}
\end{align}
\hrulefill
\vspace*{4pt}
\end{figure*}
Taking the above results into (\ref{varep}), we can obtain that when $u=u'$, $\varepsilon_{ll'}^{uu'}$ can be approximately expressed as
\begin{align}
\nonumber
\varepsilon_{ll'}^{uu'}
\approx&\frac{1}{M^2N_{\rm r}} \sum_{\ell =0}^l \sum_{m=0}^{M-1} \sum_{\ell '=0}^{l'} \sum_{m'=0}^{M-1} O_{mm'}\\
\nonumber
&\times e^{j\frac{2\pi m(l-\ell -\alpha M/2)}{M}} e^{-j\frac{2\pi m'(l'-\ell'-\alpha M/2)}{M}}.
\end{align}

Similarly, when $u\neq u'$, $\varepsilon_{ll'}^{uu'}$ can be approximately expressed as
\begin{align}
\nonumber
\varepsilon_{ll'}^{uu'}
\approx&\frac{1}{M^2N_{\mathrm{r}}} \sum_{\ell=0}^{l_{\rm min}} \sum_{m=0}^{M-1} \sum_{m'=0}^{M-1} \frac{q^{u'}[\ell]}{\tau_{m'm}}\\
\nonumber
&\times e^{j\frac{2\pi m\left( l-\ell -\alpha M/2 \right)}{M}} e^{-j\frac{2\pi m'\left( l'-\ell -\alpha M/2 \right)}{M}}.
\end{align}

Following the similar derivation, another second order statistic of $\psi^{uu'}[l]$ is
\begin{align}
\nonumber
\check{\varepsilon}_{ll'}^{uu'}
\triangleq&\mathbb{E}\left\{ \psi^{uu'}[l] \psi^{uu'}[l'] \right\}\\
\nonumber
\approx&\left\{ \begin{matrix}
\frac{1}{M^2N_{\rm r}} \sum_{\ell =0}^l \sum_{m=0}^{M-1} \sum_{\ell '=0}^{l'} \sum_{m'=0}^{M-1} \check{O}_{mm'}\\
\times e^{j\frac{2\pi m(l-\ell -\alpha M/2)}{M}} e^{j\frac{2\pi m'(l'-\ell'-\alpha M/2)}{M}}, &u=u' \\
0, &u\neq u'
\end{matrix}
\right.
\end{align}
where $\check{O}_{mm'}$ is denoted by (\ref{checkOmm'}) at the top of the next page.
\begin{figure*}[!t]
\normalsize
\begin{align}\label{checkOmm'}
\nonumber
\check{O}_{mm'}
\triangleq&\frac{1}{M^2} \sum_{m''=0}^{M-1} \sum_{m'''=0}^{M-1} \frac{\left( \tau_{mm'''} - \tau_{m'm'''} \tau_{mm'} \right) \left( \tau_{m'm''} - \tau_{m'm} \tau_{mm''} \right)} {\vert\tau_{mm'}\vert^2} e^{j\frac{2\pi m''\ell}{M}} e^{j\frac{2\pi m'''\ell'}{M}}\\
=&\frac{ \left( e^{j\frac{2\pi (m\ell'+m'\ell)}{M}} - \tau_{m'm}e^{j\frac{2\pi m(\ell+\ell')}{M}} - \tau_{mm'}e^{j\frac{2\pi m'(\ell+\ell')}{M}} \right) q[\ell] q[\ell']} {\vert\tau_{mm'}\vert^2} + q[\ell] q[\ell'] e^{j\frac{2\pi (m\ell+m'\ell')}{M}}
\end{align}
\hrulefill
\vspace*{4pt}
\end{figure*}

\emph{Case 3}: When $M-1 < l < M+L_{\rm h}-1$, the statistics of $\psi^{uu'}[l]$ are
\begin{align}
\nonumber
&\mathbb{E}\left\{\psi^{uu'}[l]\right\} =0,\\
\nonumber
&\varepsilon_{ll'}^{uu'}\approx\\
\nonumber
&\left\{ \begin{matrix}
\frac{1}{M^2N_{\rm r}} \sum_{\ell=l-(M-1)}^{L_{\rm h}-1} \sum_{m=0}^{M-1} \sum_{\ell'=l'-(M-1)}^{L_{\rm h}-1} \sum_{m'=0}^{M-1}\\
O_{mm'} e^{j\frac{2\pi m(l-\ell -\alpha M/2)}{M}} e^{-j\frac{2\pi m'(l'-\ell'-\alpha M/2)}{M}}, &u=u'\\
\frac{1}{M^2N_{\mathrm{r}}} \sum_{\ell=l_{\rm max}-(M-1)}^{L_{\rm h}-1} \sum_{m=0}^{M-1} \sum_{m'=0}^{M-1}\\
\frac{q^{u'}[\ell]}{\tau_{m'm}} e^{j\frac{2\pi m\left( l-\ell -\alpha M/2 \right)}{M}} e^{-j\frac{2\pi m'\left( l'-\ell -\alpha M/2 \right)}{M}}, &u\neq u',
\end{matrix}
\right.\\
\nonumber
&\check{\varepsilon}_{ll'}^{uu'}\approx\\
\nonumber
&\left\{ \begin{matrix}
\frac{1}{M^2N_{\rm r}} \sum_{\ell=l-(M-1)}^{L_{\rm h}-1} \sum_{m=0}^{M-1} \sum_{\ell'=l'-(M-1)}^{L_{\rm h}-1} \sum_{m'=0}^{M-1}\\
\check{O}_{mm'} e^{j\frac{2\pi m(l-\ell -\alpha M/2)}{M}} e^{j\frac{2\pi m'(l'-\ell'-\alpha M/2)}{M}}, &u=u' \\
0, &u\neq u',
\end{matrix}
\right.
\end{align}
where $l_{\rm max}$ denotes the maximum of $l$ and $l'$.

\emph{Case 4}: When $0 \leq l < L_{\rm h}-1$, $M-1 < l' < M+L_{\rm h}-1$, the second statistics of $\psi^{uu'}[l]$ are
\begin{align}
\nonumber
&\varepsilon_{ll'}^{uu'}\approx\\
\nonumber
&\left\{ \begin{matrix}
\frac{1}{M^2N_{\rm r}} \sum_{\ell=0}^{l} \sum_{m=0}^{M-1} \sum_{\ell'=l'-(M-1)}^{L_{\rm h}-1} \sum_{m'=0}^{M-1}\\
O_{mm'} e^{j\frac{2\pi m(l-\ell -\alpha M/2)}{M}} e^{-j\frac{2\pi m'(l'-\ell'-\alpha M/2)}{M}}, &u=u'\\
\frac{1}{M^2N_{\mathrm{r}}} \sum_{\ell=l'-(M-1)}^{l} \sum_{m=0}^{M-1} \sum_{m'=0}^{M-1}\\
\frac{q^{u'}[\ell]}{\tau_{m'm}} e^{j\frac{2\pi m\left( l-\ell -\alpha M/2 \right)}{M}} e^{-j\frac{2\pi m'\left( l'-\ell -\alpha M/2 \right)}{M}}, &\mathop{u\neq u'}\limits_{l'-l \leq M-1} \\
0, & \mathop{u\neq u'}\limits_{l'-l > M-1}
\end{matrix}
\right.\\
\nonumber
&\check{\varepsilon}_{ll'}^{uu'}\approx\\
\nonumber
&\left\{ \begin{matrix}
\frac{1}{M^2N_{\rm r}} \sum_{\ell=0}^{l} \sum_{m=0}^{M-1} \sum_{\ell'=l'-(M-1)}^{L_{\rm h}-1} \sum_{m'=0}^{M-1}\\
\check{O}_{mm'} e^{j\frac{2\pi m(l-\ell -\alpha M/2)}{M}} e^{j\frac{2\pi m'(l'-\ell'-\alpha M/2)}{M}}, &u=u' \\
0, &u\neq u'.
\end{matrix}
\right.
\end{align}
Conversely, when $0 \leq l' < L_{\rm h}-1$, $M-1 < l < M+L_{\rm h}-1$, the results can be obtained through the conjugate symmetry property of covariance matrix.

\section{Calculation of $\bar{P}_{{\rm z},m,n}^u$ }\label{barPzmnu}
The noise contained in the final estimated data symbol $\hat{s}_{m,n}^u$ can be expressed as
\begin{align}
\nonumber
\hat{z}_{m,n}^u=\sum_{r=0}^{N_{\rm r}-1} \sum_{l=1-L_{\rm f}}^{M-1} \sum_{\ell=0}^{M-1} \Re\left\{ g^{r,u}[\ell] f_m^*[\ell-l] z^r[-l] e^{-j\theta_{m,n}} \right\}.
\end{align}
Thus, its average power can be calculated by
\begin{align}
\nonumber
&\mathrel{\phantom{=}} \bar{P}_{{\rm z},m,n}^u\\
\nonumber
&=\mathbb{E}\left\{ \left(\hat{z}_{m,n}^u\right)^2 \right\}\\
\nonumber
&=\frac{\sigma_{\rm z}^2}{2} \sum_{l=1-L_{\rm f}}^{M-1} \sum_{\ell=0}^{M-1} \sum_{\ell'=0}^{M-1} \sum_{r=0}^{N_{\rm r}-1} \mathbb{E} \left\{ \left(g^{r,u}[\ell]\right)^* g^{r,u}[\ell'] \right\} \\
\nonumber
&\mathrel{\phantom{=}} \times f_m[\ell-l] f_m^*[\ell'-l]\\
\nonumber
&=\frac{\sigma_{\rm z}^2}{2M^2} \sum_{l=1-L_{\rm f}}^{M-1} \sum_{\ell=0}^{M-1} \sum_{\ell'=0}^{M-1} \sum_{m'=0}^{M-1} \sum_{m''=0}^{M-1} \mathbb{E} \left\{ \left(\boldsymbol{w}_{m'}^u\right)^{\rm H} \boldsymbol{w}_{m''}^u \right\} \\
\nonumber
&\mathrel{\phantom{=}} \times f_m[\ell-l] f_m^*[\ell'-l] e^{j\frac{2\pi m'(\ell-\alpha M/2)}{M}} e^{-j\frac{2\pi m''(\ell'-\alpha M/2)}{M}}\\
\nonumber
&\approx\frac{\sigma_{\rm z}^2}{2M^2N_{\rm r}} \sum_{l=1-L_{\rm f}}^{M-1} \sum_{\ell=0}^{M-1} \sum_{\ell'=0}^{M-1} \sum_{m'=0}^{M-1} \sum_{m''=0}^{M-1} \frac{1}{\tau_{m''m'}^u} \\
\nonumber
&\mathrel{\phantom{=}} \times f_m[\ell-l] f_m^*[\ell'-l] e^{j\frac{2\pi m'(\ell-\alpha M/2)}{M}} e^{-j\frac{2\pi m''(\ell'-\alpha M/2)}{M}}.
\end{align}


\begin{thebibliography}{99}

\bibitem{TLM10}
T. L. Marzetta, ``Noncooperative cellular wireless with unlimited numbers of base station antennas,'' \emph{IEEE Transactions on Wireless Communications}, vol. 9, no. 11, pp. 3590-3600, Nov. 2010.

\bibitem{FREA13}
F. Rusek \emph{et al}., ``Scaling up MIMO: Opportunities and challenges with very large arrays,'' \emph{IEEE Signal Processing Magazine}, vol. 30, no. 1, pp. 40-60, Jan. 2013.

\bibitem{LGAA14}
L. Lu, G. Y. Li, A. L. Swindlehurst, A. Ashikhmin, and R. Zhang, ``An overview of massive MIMO: Benefits and challenges,'' \emph{IEEE Journal of Selected Topics in Signal Processing}, vol. 8, no. 5, pp. 742-758, Oct. 2014.

\bibitem{EOFT14}
E. G. Larsson, O. Edfors, F. Tufvesson, and T. L. Marzetta, ``Massive MIMO for next generation wireless systems,'' \emph{IEEE Communications Magazine}, vol. 52, no. 2, pp. 186-195, Feb. 2014.

\bibitem{OFDM90}
J. A. C. Bingham, ``Multicarrier modulation for data transmission: An idea whose time has come,'' \emph{IEEE Communications Magazine}, vol. 28, no. 5, pp. 5-14, May 1990.

\bibitem{RNSM17}
R. Nissel, S. Schwarz, and M. Rupp, ``Filter bank multicarrier modulation schemes for future mobile communications,'' \emph{IEEE Journal on Selected Areas in Communications}, vol. 35, no. 8, pp. 1768-1782, Aug. 2017.

\bibitem{AJMR19}
A. Loulou, J. Yli-Kaakinen and M. Renfors, ``Advanced low-complexity multicarrier schemes using fast-convolution processing and circular convolution decomposition,'' \emph{IEEE Transactions on Signal Processing}, vol. 67, no. 9, pp. 2304-2319, May 2019.

\bibitem{QDZW20}
Y. Qi, J. Dang, Z. Zhang, L. Wu, and Y. Wu, ``Transmit covariance and waveform optimization for non-orthogonal CP-FBMA system,'' \emph{IEEE Transactions on Communications}, vol. 69, no. 1, pp. 261-275, Jan. 2021.

\bibitem{BFB11}
B. F. Boroujeny, ``OFDM versus filter bank multicarrier,'' \emph{IEEE Signal Processing Magazine}, vol. 28, no. 3, pp. 92-112, May 2011.

\bibitem{FSTW14}
F. Schaich and T. Wild, ``Waveform contenders for 5G-OFDM vs. FBMC vs. UFMC,'' in \emph{Proc. of International Symposium on Communications, Control and Signal Processing}, Athens, Greece, May 21-23, 2014, pp. 457-460.

\bibitem{TLLZ15}
T. Yunzheng, L. Long, L. Shang, and Z. Zhi, ``A survey: Several technologies of non-orthogonal transmission for 5G,'' \emph{China Communications}, vol. 12, no. 10, pp. 1-15, Oct. 2015.



\bibitem{SJSJ19}
S. Sarmiento, J. A. Altabas, S. Spadaro, and J. A. Lazaro, ``Experimental assessment of 10 Gbps 5G multicarrier waveforms for high-layer split U-DWDM-PON-Based fronthaul,'' \emph{Journal of Lightwave Technology}, vol. 37, no. 10, pp. 2344-2351, May 2019.

\bibitem{PCNL02}
P. Siohan, C. Siclet, and N. Lacaille, ``Analysis and design of OFDM/OQAM systems based on filterbank theory,'' \emph{IEEE Transactions on Signal Processing}, vol. 50, no. 5, pp. 1170-1183, May 2002.

\bibitem{TTMM07}
T. Ihalainen, T. H. Stitz, M. Rinne, and M. Renfors, ``Channel equalization in filter bank based multicarrier modulation for wireless communications,'' \emph{EURASIP Journal on Advances in Signal Processing}, vol. 2007, no. 1, pp. 1-18, 2007.


\bibitem{CJRA08}
C. L\'{e}l\'{e}, J.-P. Javaudin, R. Legouable, A. Skrzypczak, and P. Siohan, ``Channel estimation methods for preamble-based OFDM/OQAM modulations,'' \emph{European Transactions on Telecommunications}, vol. 19, no. 7, pp. 741-750, 2008.

\bibitem{DSLJ08}
D. S. Waldhauser, L. G. Baltar, and J. A. Nossek, ``MMSE subcarrier equalization for filter bank based multicarrier systems,'' in \emph{Proc. of IEEE Workshop on Signal Processing Advances in Wireless Communications}, Recife, Brazil, Jul. 6-9, 2008, pp. 525-529.

\bibitem{AIJL09}
A. Ikhlef and J. Louveaux, ``Per subchannel equalization for MIMO FBMC/OQAM systems,'' in \emph{Proc. of IEEE Pacific Rim Conference on Communications, Computers and Signal Processing}, Victoria, BC, Canada, Aug. 23-26, 2009, pp. 559-564.

\bibitem{TAJM11}
T. Ihalainen, A. Ikhlef, J. Louveaux, and M. Renfors, ``Channel equalization for multi-antenna FBMC/OQAM receivers,'' \emph{IEEE Transactions on Vehicular Technology}, vol. 60, no. 5, pp. 2070-2085, Jun. 2011.

\bibitem{MCAI12}
M. Caus and A. I. Perez-Neira, ``Transmitter-receiver designs for highly frequency selective channels in MIMO FBMC systems,'' \emph{IEEE Transactions on Signal Processing}, vol. 60, no. 12, pp. 6519-6532, Dec. 2012.

\bibitem{AANL15}
A. Aminjavaheri, A. Farhang, N. Marchetti, L. E. Doyle, and B. Farhang-Boroujeny, ``Frequency spreading equalization in multicarrier massive MIMO,'' in \emph{Proc. of IEEE International Conference on Communication Workshop}, London, U.K., Jun. 8-12, 2015, pp. 1292-1297.

\bibitem{XMDG16}
X. Mestre and D. Gregoratti, ``Parallelized structures for MIMO FBMC under strong channel frequency selectivity,'' \emph{IEEE Transactions on Signal Processing}, vol. 64, no. 5, pp. 1200-1215, Mar. 2016.

\bibitem{ANLB14}
A. Farhang, N. Marchetti, L. E. Doyle, and B. Farhang-Boroujeny, ``Filter
bank multicarrier for massive MIMO,'' in \emph{Proc. of IEEE Vehicular Technology Conference}, Vancouver, Canada, Sept. 14-17, 2014, pp. 1-7.

\bibitem{ANLD17}
A. Aminjavaheri, A. Farhang, L. E. Doyle, and B. Farhang-Boroujeny, ``Prototype filter design for FBMC in massive MIMO channels,'' in \emph{Proc. of IEEE International Conference on Communications}, Paris, France, May 21-25, 2017, pp. 1-6.

\bibitem{AFFB18}
A. Aminjavaheri, A. Farhang, and B. Farhang-Boroujeny, ``Filter bank multicarrier in massive MIMO: Analysis and channel equalization,'' \emph{ IEEE Transactions on Signal Processing}, vol. 66, no. 15, pp. 3987-4000, Aug. 2018.

\bibitem{FXFJ18}
F. Rottenberg, X. Mestre, F. Horlin, and J. Louveaux, ``Performance analysis of linear receivers for uplink massive MIMO FBMC-OQAM systems,'' \emph{IEEE Transactions on Signal Processing}, vol. 66, no. 3, pp. 830-842, Feb. 2018.

\bibitem{VISR09}
A. Viholainen, T. Ihalainen, T. H. Stitz, M. Renfors, and M. G. Bellanger, ``Prototype filter design for filter bank based multicarrier transmission,'' in \emph{Proc. of European Signal Processing Conference}, Glasgow, U.K., Aug. 2009, pp. 1359-1363.

\bibitem{DJZZ09}
J. Dang and Z. Zhang, ``Channel equalization for filter bank based transmultiplexer system for wireless communications,'' in \emph{Proc. of International Conference on Wireless Communications, Networking and Mobile Computing}, Beijing, China, Sept. 24-26, 2009, pp. 1-5.

\bibitem{PPV93}
P. P. Vaidyanathan, \emph{Multirate Systems and Filter Banks}, Prentice Hall PTR, Englewood Cliffs, 1993.

\bibitem{3GPP}
\emph{Technical Specification Group Radio Access Network; Evolved Universal Terrestrial Radio Access (E-UTRA); User Equipment (UE) Radio Transmission and Reception}, document TS 36.101, 3GPP, Dec. 2007. [Online]. Available: http://www.3gpp.org

\bibitem{ITU}
\emph{Guidelines for Evaluation of Radio Transmission Technologies for IMT-2000}, document ITU-R M.1225, ITU, 1997. [Online]. Available: http://www.itu.int

\bibitem{MB10}
M. Bellanger \emph{et al}., ``FBMC physical layer: A primer,'' \emph{PHYDYAS}, vol. 25, no. 4, 2010.

\bibitem{MJV14}
M. Vetterli, J. Kova\v{c}evi\'{c}, and V. K. Goyal, \emph{Foundations of Signal Processing}. Cambridge, U.K.: Cambridge Univ. Press, 2014.

\bibitem{PGAK20}
P. Singh, H. B. Mishra, A. K. Jagannatham, K. Vasudevan, and L. Hanzo, ``Uplink sum-rate and power scaling laws for multi-wser massive MIMO-FBMC systems,'' \emph{IEEE Transactions on Communications}, vol. 68, no. 1, pp. 161-176, Jan. 2020.

\end{thebibliography}
\end{document}